\documentclass[pdflatex,sn-nature]{sn-jnl}

\usepackage{graphicx}%
\usepackage{multirow}%
\usepackage{amsmath,amssymb,amsfonts}%
\usepackage{amsthm}%
\usepackage{mathrsfs}%
\usepackage[title]{appendix}%
\usepackage{xcolor}%
\usepackage{textcomp}%
\usepackage{manyfoot}%
\usepackage{booktabs}%

\theoremstyle{definition}
\newtheorem{setting}{Setting}
\newtheorem{assumption}{Assumption}
\newtheorem{definition}{Definition}
\newtheorem{axiom}{Axiom}
\newtheorem{prediction}{Prediction}
\theoremstyle{plain}
\newtheorem{theorem}{Theorem}
\newtheorem{lemma}{Lemma}
\newtheorem{corollary}{Corollary}[theorem]

\begin{document}

\title[The Semantic Least-Energy Principle]{The Semantic Least-Energy Principle: A Hypothesis for Intelligence}

\author*[1,2]{\fnm{Jie} \sur{Zhang}}\email{jie.zhang@ranplanwireless.com}

\author[3]{\fnm{Haoyuan} \sur{Zhu}}\email{hzhu51@sheffield.ac.uk}

\author[2]{\fnm{James Jinheng} \sur{Zhang}}\email{jzhang07429@gmail.com}

\author[4]{\fnm{Haonan} \sur{Hu}}\email{huhn@cqupt.edu.cn}

\affil*[1]{\orgdiv{R\&D Department}, \orgname{Ranplan Wireless Network Design Ltd.}, \orgaddress{\city{Cambridge}, \postcode{CB23 3UY}, \country{UK}}}

\affil[2]{\orgdiv{R\&D Department}, \orgname{Cambridge AI+ Ltd.}, \orgaddress{\city{Cambridge}, \postcode{CB23 3UY}, \country{UK}}}

\affil[3]{\orgdiv{School of Electronic and Electrical Engineering}, \orgname{University of Sheffield}, \orgaddress{\city{Sheffield}, \postcode{S10 2TN}, \country{UK}}}

\affil[4]{\orgname{Chongqing University of Posts and Telecommunications}, \orgaddress{\city{Chongqing}, \country{China}}}

\abstract{Despite remarkable advances in artificial intelligence and cognitive neuroscience, no generally accepted first-principle explains why intelligent systems organize latent semantic states as they do. Existing frameworks such as information theory, the Information Bottleneck, the Degree of Information Abstraction, predictive coding and the Free Energy Principle provide powerful frameworks for understanding communication, learning, and prediction, but do not explicitly explain the emergence and organization of semantic intelligence. Here we propose the \textbf{Semantic Least-Energy Principle (SLEP)} as a hypothesis that intelligent systems evolve internal representations by maximizing semantic utility while progressively minimizing semantic, predictive, and computational energy. We formulate this hypothesis within a variational framework in which semantic cognition is governed by a Semantic Action Functional whose stationary solutions define efficient trajectories on a latent semantic manifold. This formulation emerges a series of theoretical predictions, including semantic geometry, semantic thermodynamics, and low-energy latent semantic states as complementary consequences of the same underlying optimization process. SLEP unifies semantic abstraction, reasoning, planning, and communication within a common mathematical framework while generating experimentally testable predictions for both artificial and biological intelligence. Although the hypothesis remains to be rigorously validated, it provides a principled foundation for investigating semantic intelligence from a first-principle perspective.}

\keywords{Semantic Least-Energy Principle, Semantic Intelligence, Variational Theory, Information Abstraction, Semantic Geometry, Semantic Thermodynamics, Artificial Intelligence}

\maketitle

\section{Introduction}\label{sec:intro}

Intelligence enables biological and artificial systems to perceive, interpret, predict, reason, plan, communicate, and interact adaptively with complex environments. Over the past several decades, remarkable advances in machine learning \cite{lecun2015,goodfellow2016}, neuroscience, and cognitive science have substantially improved our understanding of these capabilities. Deep neural networks \cite{rumelhart1986,lecun1998,lecun2015}, large language models \cite{brown2020,openai2023}, multimodal foundation models \cite{radford2021,alayrac2022}, and world models \cite{ha2018,schrittwieser2020,hafner2025} have demonstrated unprecedented performance across diverse tasks, while predictive coding \cite{rao1999}, efficient coding \cite{barlow1961}, and the Free Energy Principle \cite{friston2010} have provided influential theoretical perspectives on neural computation and cognition. Nevertheless, despite these successes, there remains no generally accepted first-principle hypothesis explaining why intelligent systems organize latent semantic states as they do or what fundamental principle governs the emergence and evolution of semantic intelligence.

Information theory \cite{shannon1948,cover2006} established the mathematical foundations of communication by quantifying uncertainty and the limits of reliable information transmission while deliberately excluding semantic meaning. The Information Bottleneck \cite{tishby1999} subsequently introduced the principle that useful representations should preserve task-relevant information while compressing irrelevant variability, providing a powerful foundation for modern representation learning \cite{bengio2013}. More recently, advances in self-supervised learning \cite{chen2020,he2020,grill2020,he2022,assran2023}, world models \cite{ha2018,hafner2025}, and multimodal foundation models \cite{radford2021,alayrac2022,driess2023} have demonstrated that compact latent semantic states can support increasingly sophisticated perception, prediction, reasoning, and planning. These developments have significantly advanced artificial intelligence, yet their optimization objectives remain largely centred on likelihood maximization \cite{radford2021,alayrac2022}, prediction accuracy \cite{ha2018,hafner2025}, reconstruction fidelity \cite{he2022,assran2023}, or task-specific reward functions \cite{chen2020,he2020,grill2020,sutton2018}. None explicitly addresses the emergence, organization, and evolution of latent semantic states themselves.

A similar trend can be observed in biological intelligence. Sensory systems progressively transform high-dimensional physical signals into increasingly abstract neural representations that support adaptive behaviour. The 2021 Nobel Prize in Physiology or Medicine, awarded for the discovery of thermosensory and mechanosensory receptors \cite{caterina1997,coste2010}, highlighted the fundamental biological mechanisms through which physical stimuli are converted into internal neural representations. Together with hierarchical sensory processing, predictive coding, and efficient coding, these discoveries suggest that biological intelligence constructs progressively more abstract internal representations while retaining task-relevant semantic information. Although these observations do not establish a universal organizing principle, they motivate the search for a more general theoretical framework capable of explaining why semantic abstraction emerges.

To quantify semantic abstraction, we previously introduced the \textbf{Degree of Information Abstraction (DIA)} \cite{zhu2026dia}, which measures the quality of abstraction by jointly evaluating information compression and semantic preservation. DIA further demonstrated that heterogeneous observations from different sensing modalities can be aligned within a shared latent semantic space, where semantically equivalent observations are represented by common latent abstractions despite large differences in their physical appearance. These results suggest that semantic abstraction and latent semantic alignment may represent fundamental underlying principles of organizational representation for multi-modal data representation. However, DIA addresses \textbf{how} semantic abstraction can be quantified rather than \textbf{why} intelligent systems should evolve toward such representations. Explaining the emergence and organization of latent semantic states requires a more fundamental theoretical hypothesis.

Here we propose the \textbf{Semantic Least-Energy Principle (SLEP)} as such a hypothesis. We hypothesize that intelligent systems evolve internal representations by preserving behaviourally relevant semantic structure while progressively reducing semantic, predictive, and computational cost. Rather than viewing perception, reasoning, planning, communication, and prediction as separate computational processes, SLEP hypothesizes that they may be understood as different manifestations of a common variational principle operating on a latent semantic manifold.

To formalize this hypothesis, we formulate semantic intelligence as a constrained variational optimization problem governed by a \textbf{Semantic Action Functional}. Following the variational principles of classical mechanics \cite{lanczos1986,goldstein2002}, stationary solutions of this functional define trajectories on a latent semantic manifold, leading to a semantic analogue of Euler--Lagrange equations. Extending the framework from individual trajectories to statistical ensembles naturally leads to hypotheses concerning \textbf{semantic geometry} and \textbf{semantic thermodynamics}, in which semantic distance, semantic entropy, semantic free energy, and Boltzmann-like organization emerge as macroscopic descriptions of semantic organization. Our proposed concepts are introduced not as experimentally established properties of intelligence, but as theoretical consequences of the proposed variational hypothesis.

The primary objective of this work is therefore not to establish a new law of intelligence, but to present a mathematically coherent and biologically motivated hypothesis that unifies several existing theoretical perspectives within a common framework. Specifically, SLEP seeks to extend the progression from Information Theory \cite{shannon1948}, which explains the transmission of information, through the Information Bottleneck \cite{tishby1999}, which characterizes task-relevant representations, and DIA \cite{zhu2026dia}, which quantifies semantic abstraction and shared latent semantic states, toward a first-principle hypothesis concerning the emergence and organization of semantic intelligence.

A useful scientific hypothesis should not only provide conceptual unification but also generate falsifiable predictions. SLEP predicts that, if the hypothesis is correct, intelligent systems should exhibit progressively lower semantic energy during learning, semantic trajectories that approximate geodesics on latent semantic manifolds, statistical organization consistent with semantic thermodynamics, and stable modality-invariant latent semantic states. These predictions are intended to guide future empirical studies across artificial intelligence, neuroscience, cognitive science, and embodied intelligent systems.

The remainder of this paper develops the mathematical formulation of SLEP, derives its principal theoretical consequences, discusses its relationship to existing theories of intelligence, and identifies experimentally testable predictions that could support or refute the proposed hypothesis.

\section{Key Hypotheses}\label{sec:hypotheses}

Rather than presenting experimentally established results, this section develops the central hypotheses arising from SLEP. Each hypothesis is formulated mathematically, motivated by existing theories of intelligence, and shown to generate experimentally testable predictions. Together they constitute a coherent theoretical framework for understanding semantic intelligence from a variational perspective.

\subsection{Hypothesis I: Intelligence Evolves Towards Low-Energy Latent Semantic States}\label{subsec:h1}

A defining characteristic of intelligent systems is their ability to retain task-relevant semantic information while discarding unnecessary detail. Information theory, efficient coding, the Information Bottleneck, predictive coding, and modern representation learning all suggest that useful internal representations become progressively more compact without sacrificing functionality.

We therefore hypothesize that semantic intelligence is governed by the following principle:

\medskip
\noindent\textbf{Hypothesis I.}\\
\emph{Intelligent systems evolve internal latent semantic states that maximize semantic utility while progressively minimizing semantic, predictive, and computational energy.}
\medskip

This hypothesis is formulated through the Semantic Action Functional
\begin{equation}
J[z] = \int_{t_{0}}^{t_{1}}\left( U_{S}(z) - \lambda E_{S}(z,\dot{z}) \right)dt.
\label{eq:action1}
\end{equation}
where $J[z]$ is the semantic action functional defined over the latent trajectory $z(t) \in \mathcal{M}_{S}$ for $t\in[t_{0},t_{1}]$ (with $t_{0}$ and $t_{1}$ the initial and final times of the trajectory), $U_{S}(z)$ is the semantic utility functional, $E_{S}\left( z,\dot{z} \right)$ is the semantic energy functional, $\dot{z} = \frac{dz}{dt}$ denotes the semantic velocity, and $\lambda$ is a trade-off parameter balancing utility against energy.

Unlike conventional optimization objectives that primarily minimize reconstruction or prediction error, SLEP hypothesizes that these costs represent different manifestations of a common semantic energy.

If correct, the hypothesis predicts that learned representations should exhibit progressively lower normalized semantic energy while maintaining semantic utility.

\subsubsection*{Supporting evidence}

Information theory identifies redundancy reduction as a prerequisite for efficient communication, efficient coding explains how biological systems maximize representational efficiency under resource constraints, the Information Bottleneck formalizes the trade-off between compression and task relevance, representation learning discovers compact latent semantic states that preserve semantic structure, and DIA quantifies semantic abstraction through joint information compression and semantic preservation. They all progressively reduce representational complexity while preserving task-relevant semantic information. Together, these developments provide converging conceptual support for the hypothesis that semantic intelligence evolves toward increasingly efficient low-energy latent semantic states.

\subsubsection*{Testable predictions}

\begin{itemize}
\item Semantic energy decreases during learning.
\item Semantic utility remains approximately constant after convergence.
\item Efficient models occupy lower-energy semantic states than less efficient models.
\end{itemize}

\subsection{Hypothesis II: Semantic Reasoning Follows Variational Trajectories}\label{subsec:h2}

If latent semantic states evolve according to the Semantic Action Functional, then their evolution should satisfy stationary-action conditions analogous to those encountered in classical mechanics \cite{lanczos1986,goldstein2002,abraham1978}.

The second hypothesis is therefore

\medskip
\noindent\textbf{Hypothesis II.}\\
\emph{Intelligent cognitive processes evolve along approximately stationary-action trajectories on a latent semantic manifold.}
\medskip

Applying the calculus of variations yields
\begin{equation}
\frac{d}{dt}\left( \frac{\partial L_{S}}{\partial\dot{z}^{i}} \right) - \frac{\partial L_{S}}{\partial z^{i}} = 0,
\label{eq:el1}
\end{equation}
with
\begin{equation}
L_{S} = U_{S} - \lambda E_{S}.
\label{eq:lagrangian1}
\end{equation}
where $L_{S}$ is the semantic Lagrangian and $z^{i},\dot z^{i}$ are the coordinates and velocity components of the latent state; these are the Euler--Lagrange equations \cite{lanczos1986,goldstein2002} for semantic intelligence.

These equations should not be interpreted as exact computational rules implemented by biological or artificial systems. Rather, they describe the predicted macroscopic organization of efficient semantic evolution.

\subsubsection*{Supporting evidence}

Predictive coding minimizes prediction error, world models learn smooth latent dynamics for prediction and planning, and reinforcement learning optimizes long-term behaviour through sequential decision-making. They all describe optimization processes operating over latent semantic states. Together, these developments suggest that intelligent systems naturally evolve efficient trajectories in latent semantic state spaces, providing conceptual support for SLEP.

\subsubsection*{Testable predictions}

\begin{itemize}
\item Learned semantic trajectories should progressively approach stationary trajectories.
\item Semantic action should decrease during optimization.
\item Large deviations should occur primarily during novel or unexpected events.
\end{itemize}

\subsection{Hypothesis III: Semantic Geometry Emerges from Variational Dynamics}\label{subsec:h3}

If semantic trajectories satisfy variational dynamics, then latent semantic states should possess an intrinsic geometric structure.

We therefore hypothesize

\medskip
\noindent\textbf{Hypothesis III.}\\
\emph{Latent semantic states organize as a differentiable latent manifold whose geometry reflects semantic relationships rather than sensory similarity.}
\medskip

The semantic metric
\begin{equation}
g_{ij}(z) = \mathbb{E}_{x\sim p_{\psi}(\cdot\mid z)}\!\left[ \partial_{i}\log p_{\psi}(x \mid z)\, \partial_{j}\log p_{\psi}(x \mid z) \right],
\label{eq:metric1}
\end{equation}
where $g_{ij}$ is the Fisher--Rao semantic metric (the indices $i,j$ ranging over the latent coordinates), $p_{\psi}(x\mid z)$ is the learned decoder with parameters $\psi$ (the model's distribution of observations $x$ given a latent state $z$), $\partial_{i}=\partial/\partial z^{i}$, and the expectation is taken over $x\sim p_{\psi}(\cdot\mid z)$. This metric defines semantic distance, geodesics, curvature, and semantic kinetic energy.

Consequently, semantic reasoning is predicted to follow approximately geodesic paths rather than arbitrary feature transformations.

\subsubsection*{Supporting evidence}

Representation learning and manifold learning suggest that high-dimensional observations can be organized within compact latent manifolds, information geometry \cite{amari2016,nielsen2020} characterizes the intrinsic structure of these representation spaces, shared latent-space models demonstrate semantic alignment across heterogeneous modalities, and DIA quantifies the degree of semantic abstraction achieved by such representations. Although developed independently, they all point toward the existence of structured latent semantic states organized on an underlying differentiable semantic manifold.

\subsubsection*{Testable predictions}

\begin{itemize}
\item Semantically equivalent concepts should cluster independent of sensing modality.
\item Learned reasoning trajectories should approximate manifold geodesics.
\item Curvature should increase near semantic transitions.
\end{itemize}

\subsection{Hypothesis IV: Semantic Intelligence Exhibits Thermodynamic Organization}\label{subsec:h4}

Many complex systems exhibit macroscopic regularities despite microscopic complexity.

We hypothesize that semantic intelligence behaves similarly.

\medskip
\noindent\textbf{Hypothesis IV.}\\
\emph{Latent semantic states evolve towards statistically stable organizations governed by semantic thermodynamics.}
\medskip

The hypothesis predicts that ensembles of latent semantic states exhibit a thermodynamic organization characterized by semantic entropy, semantic temperature, semantic free energy, and Boltzmann-like equilibrium distributions.

The expected equilibrium becomes
\begin{equation}
q^{*}(z) = \frac{1}{Z}\,e^{- V(z)/T}
\label{eq:boltzmann1}
\end{equation}
where $q^{*}$ is the equilibrium distribution over latent semantic states, $V(z)$ is the semantic potential, $T$ is the semantic temperature, and $Z$ is the normalizing partition function.

Importantly, semantic temperature is interpreted as a macroscopic descriptor of uncertainty rather than physical temperature.

\subsubsection*{Supporting evidence}

The proposed hypothesis draws conceptual motivation from statistical mechanics \cite{boltzmann1872,gibbs1902,pathria2011}, information theory \cite{shannon1948,jaynes1957}, and the Free Energy Principle \cite{friston2010,friston2017}, which collectively illustrate how macroscopic organization can emerge from the optimization of underlying system dynamics. Unlike physical thermodynamics, however, SLEP interprets semantic entropy, temperature, and free energy as information-theoretic constructs that characterize the statistical organization of latent semantic states rather than physical quantities.

\subsubsection*{Testable predictions}

\begin{itemize}
\item Stable semantic states should exhibit higher probability.
\item High-energy semantic states should become increasingly rare.
\item Semantic entropy should decrease as stable world models emerge.
\end{itemize}

\subsection{Hypothesis V: Semantic Least-Energy as a Unifying Principle}\label{subsec:h5}

The preceding hypotheses motivate the central proposition of this work.

\medskip
\noindent\textbf{Central Hypothesis (Semantic Least-Energy Principle).}\\
\emph{Semantic intelligence emerges from a universal variational tendency to preserve behaviourally relevant semantic information while progressively reducing semantic, predictive, and computational energy.}
\medskip

SLEP extends the conceptual evolution from Information Theory, through the Information Bottleneck and DIA, toward a unified variational framework for semantic intelligence. Rather than replacing existing optimization principles, SLEP provides an overarching theoretical perspective that seeks to explain why intelligent systems evolve increasingly efficient latent semantic states while preserving task-relevant semantic information.

Within this framework, diverse learning paradigms, which include representation learning, predictive coding, world models, semantic communication, and multimodal intelligence, can be interpreted as complementary manifestations of a common variational principle. Although these paradigms differ in their objectives, architectures, and learning mechanisms, they may share the underlying tendency to preserve semantic utility while progressively reducing semantic, predictive, and computational cost.

Whether this hypothesis accurately describes the organizing principles of biological and artificial intelligence remains an open scientific question. Its significance therefore lies not in claiming an established law, but in providing a mathematically coherent, conceptually unified, and experimentally falsifiable framework that connects existing theories and generates precise, testable predictions. Future theoretical developments and empirical investigations will determine the validity, scope, and limitations of the proposed Semantic Least-Energy Principle.

\subsubsection*{Supporting evidence}

Despite their different objectives and mathematical formulations, the previously mentioned approaches such as the Information Bottleneck, DIA and predictive coding all suggest that intelligent systems progressively organize information into increasingly efficient latent semantic states. Collectively, these converging developments provide conceptual motivation for the hypothesis that a common variational principle may underlie semantic intelligence across diverse learning paradigms.

\subsubsection*{Testable Predictions}

If the Semantic Least-Energy Principle represents a fundamental organizing principle of intelligence, the following predictions should hold:

\begin{itemize}
\item \textbf{Universality.} Diverse learning paradigms, e.g., representation learning, predictive coding, world models, semantic communication, and multimodal intelligence, should exhibit common trends toward increasingly efficient semantic organization despite differences in architectures and optimization objectives.
\item \textbf{Generality.} Similar semantic organization should emerge across different modalities, neural architectures, and scales, from artificial neural networks to, potentially, biological neural systems.
\item \textbf{Unified explanation.} A single variational framework based on semantic utility and semantic energy should consistently explain phenomena currently described separately by information compression, prediction, semantic abstraction, and latent semantic state learning.
\end{itemize}

Failure of these predictions across sufficiently broad classes of intelligent systems would provide evidence against the proposed hypothesis, thereby defining clear conditions under which the Semantic Least-Energy Principle should be refined or rejected.

\section{Discussion}\label{sec:discussion}

\subsection{A Variational Hypothesis for Semantic Intelligence}

SLEP is proposed as a hypothesis for understanding the emergence and organization of semantic intelligence. It provides a variational framework for explaining why intelligent systems develop increasingly efficient latent semantic states while preserving task-relevant semantic information. The hypothesis suggests that perception, learning, reasoning, planning, and communication can be viewed as different manifestations of a common variational principle operating in latent semantic spaces. Analogous to the role of energy and action in theoretical physics, semantic energy and the Semantic Action Functional serve as macroscopic theoretical constructs that characterize the organization and evolution of semantic intelligence rather than explicit computational mechanisms.

\subsection{Relationship to Existing Theories}

SLEP extends the conceptual progression from Information Theory, through the Information Bottleneck and the Degree of Information Abstraction (DIA), toward a unified variational framework for semantic intelligence.

Building on these developments, SLEP seeks to explain why semantically meaningful latent semantic states emerge. Rather than treating semantic abstraction as an empirical property of learned representations, the hypothesis proposes that shared latent semantic spaces arise because they provide increasingly efficient organizations of task-relevant semantic information. In this view, semantic abstraction becomes a consequence of an underlying variational optimization process.

\subsection{Implications for Artificial Intelligence}

Because SLEP is formulated independently of any specific computational architecture, it provides a general theoretical perspective rather than an implementation framework. If the hypothesis is valid, future learning algorithms and neural architectures may be designed to approximate low-action semantic trajectories rather than relying solely on empirical architectural choices. Existing mechanisms, including residual connections \cite{he2016}, attention \cite{vaswani2017}, sparse activation \cite{fedus2022}, adaptive computation \cite{graves2016}, and mixture-of-experts routing \cite{shazeer2017}, may therefore be viewed as approximations to efficient semantic dynamics rather than isolated engineering innovations.

As shown in Theorem~\ref{thm:T9} of the Methods section, we prove that, under appropriate substitutions, the objective functionals of the Information Bottleneck, variational autoencoders (VAEs), maximum-entropy reinforcement learning, natural-gradient learning, and world-model learning all arise as special cases of the proposed Semantic Least-Energy framework. This demonstrates SLEP's potential as a mathematical unification of theories of intelligence and learning.

\subsection{Limitations and Future Directions}

Several limitations should be acknowledged. The proposed definitions of semantic utility, semantic energy, semantic geometry, and semantic thermodynamics are operational rather than unique, and alternative mathematical formulations may prove equally appropriate. Furthermore, although SLEP is conceptually motivated by information theory, variational mechanics, machine learning, and neuroscience, comprehensive experimental validation remains to be established.

Future work should therefore focus on testing the predictions derived from the proposed hypotheses across diverse intelligent systems, including large language models \cite{brown2020,openai2023}, multimodal foundation models \cite{radford2021,alayrac2022,reed2022,driess2023}, embodied agents \cite{driess2023,brohan2023}, and biological neural systems. Such studies will determine whether the predicted relationships between semantic efficiency, variational dynamics, geometric organization, and statistical regularities represent a general principle of intelligence or are specific to particular learning paradigms.

\subsection{Concluding Perspective}

SLEP provides an overarching theoretical perspective in which diverse learning paradigms can be interpreted as complementary manifestations of a common variational principle.

Whether this hypothesis ultimately describes a fundamental organizing principle of biological and artificial intelligence remains an open scientific question. Its significance therefore lies not in claiming an established law, but in providing a mathematically coherent, conceptually unified, and experimentally falsifiable framework that generates explicit predictions concerning semantic dynamics, geometric organization, and statistical behaviour. If supported by future theoretical and empirical studies, SLEP could contribute to the development of a first-principle theory of semantic intelligence. Equally importantly, if its predictions are not supported, the hypothesis will require refinement or rejection, thereby advancing understanding through the process of scientific falsification.

\section{Methods}\label{sec:methods}

This section presents the methodological framework underlying the \textbf{Semantic Least-Energy Principle (SLEP)}, which models intelligence as variational optimization on a latent semantic manifold. The central hypothesis is that intelligent systems preserve behaviourally relevant semantic structure while minimizing semantic, predictive, and computational energy. Within this framework, learning, reasoning, planning, semantic communication, and world-model construction emerge as different manifestations of a common variational principle.

The methodology consists of three complementary components:

\begin{itemize}
\item \textbf{Formal Semantic Variational Framework}, which formulates semantic utility, semantic energy, the Semantic Least-Energy Principle, the Semantic Action Functional, and the semantic Euler--Lagrange equations governing semantic dynamics.
\item \textbf{Semantic Manifold Geometry}, which equips the latent semantic manifold with a Riemannian geometry defining semantic distance, geodesics, and curvature.
\item \textbf{Semantic Thermodynamics}, which extends the theory to ensembles of semantic states through entropy, free energy, semantic temperature, and Boltzmann-like equilibrium distributions.
\end{itemize}

\subsection{Structure of this chapter}\label{sec:structure}

This chapter organizes its statements into five categories, each with a specified methodological role, and generally follows their dependency order.

\begin{itemize}
\item \textbf{Setting (S) and regularity assumptions (R)} are assumed rather than derived within the framework and may be assessed for a specified system; their role is to ensure that the subsequent objects are well defined.
\item \textbf{Definitions (D)} specify the objects and notation used in the subsequent statements.
\item \textbf{Axioms (A)} state the principal empirical postulates of the framework. Each is accompanied by a \emph{Status} note describing its role and the observations that would support or challenge it.
\item \textbf{Theorems and corollaries (T/C)} are derived from S, R, D, and A. Proofs are provided, with the assumptions of cited standard results stated where they are used.
\item \textbf{Predictions (P)} are the falsifiable statements obtained by combining theorems with a measurement protocol; each carries its derivation chain, time scale, external anchor and failure criterion.
\end{itemize}

The presentation follows the dependency order $\mathrm{S}\to\mathrm{R}\to\mathrm{D}\to\mathrm{A}\to\mathrm{T}\to\mathrm{C}\to\mathrm{P}$; later statements are not used to justify earlier assumptions. Regularity assumptions are technical conditions that may be assessed for a specified system, whereas axioms are postulates whose empirical adequacy is evaluated through their derived predictions. Subsection~\ref{sec:theorems} relates these formal results to the conceptual hypotheses of Section~\ref{sec:hypotheses}.

\subsection{Setting}

\begin{setting}[S1: system class]\label{set:S1}
An \emph{intelligent system} is a septuple
\begin{equation}
\mathcal{S}=\big(\{X_m\}_{m\in\mathcal{K}},\ x(t),\ \{\Phi_m\},\ p_\psi(x\mid z),\ \Psi,\ \mathcal{A},\ T\big),
\end{equation}
consisting of a family of modality observation spaces $X_m$ indexed by a modality index set $\mathcal{K}$, and an environmental observation process $x(t)$; one encoder per modality $\Phi_m:X_m\to Z$ with $Z\subset\mathbb{R}^n$ the latent coordinate domain; a decoder family $p_\psi(\cdot\mid z)$; an internal update map $\Psi$ (the one-step latent evolution in the absence of new observation, i.e.\ the carrier of ``inference''); a possibly empty action interface $\mathcal{A}$; and a temperature parameter $T>0$.
\end{setting}

This setting defines the present scope of the framework: the analysis is restricted to systems admitting the specified latent-variable generative description. The multimodal structure ($|\mathcal{K}|\ge 1$) and the decoder are present from the outset.

\subsection{Regularity assumptions}\label{sec:assumptions}

\begin{assumption}[R1: smoothness and domination]\label{as:R1}
$p_\psi(x\mid z)>0$ with $z$-independent support, smooth in $z$; the score $\partial_i\log p_\psi(\cdot\mid z)$ is square-integrable and dominated by an integrable envelope, so that differentiation and expectation commute. If these conditions fail, equivalence between the score-covariance and expected-Hessian forms of the Fisher matrix is not guaranteed.
\end{assumption}

\begin{assumption}[R2: integrable null directions]\label{as:R2}
The kernel distribution $N_z=\ker\tilde g(z)$ of the pre-metric is smooth, of constant rank, and involutive (Frobenius-integrable), so the leaf space $Z/\!\sim$ is a smooth manifold. Alternatively, one may assume $\tilde g\succ0$ on the working region; the present formulation adopts the quotient construction to retain possible null directions.
\end{assumption}

\begin{assumption}[R3: uniform ellipticity, confining potential, normalizability]\label{as:R3}
On the working region $g$ is uniformly positive definite and bounded; $V\in C^2$ is bounded below; and $\int_{M_S}e^{-V/T}\,d\mu<\infty$. These conditions provide sufficient regularity for the stationary-measure, minimizer-existence, and well-posedness results stated below.
\end{assumption}

\begin{assumption}[R4: environmental stationarity]\label{as:R4}
On the inference time scale the observation process is stationary and ergodic, so the local observation law $\mathcal{D}_z$ near each $z$ is well defined and time-independent. This assumption makes the potential time-independent on the inference scale. Non-stationary environments are not covered by the present formulation and may require a stochastic-control extension.
\end{assumption}

\begin{assumption}[R5: ergodicity]\label{as:R5}
The inference-time diffusion is assumed ergodic with respect to its stationary measure, so that time averages converge to the corresponding ensemble averages in the long-time limit.
\end{assumption}

\subsection{Definitions}\label{sec:framework}

\subsubsection{Latent space and generative model}

\begin{definition}[D1: latent semantic coordinates and generative model]\label{def:latent}
$z\in Z$ is a latent semantic coordinate. The pair $(p_\psi,\{\Phi_m\})$ is read as in Setting~\ref{set:S1}: the decoder $p_\psi(x\mid z)$ is the generative object, and each encoder $\Phi_m$ (equivalently an amortized posterior $q_\phi(z\mid x)$) parameterizes an approximate inference map associated with the decoder.
\end{definition}

\subsubsection{Semantic geometry}\label{sec:geometry}

\begin{definition}[D2: Fisher pre-metric]\label{def:premetric}
Under Assumption~\ref{as:R1},
\begin{equation}
\tilde g_{ij}(z):=\mathbb{E}_{p_\psi(x\mid z)}\big[\partial_i\log p_\psi(x\mid z)\,\partial_j\log p_\psi(x\mid z)\big],
\label{eq:premetric}
\end{equation}
a symmetric positive-semidefinite $2$-tensor on $Z$.
\end{definition}

\begin{definition}[D3: semantic manifold and metric]\label{def:manifold}
$M_S:=Z/\!\sim$, where $z\sim z'$ if and only if they are joined along the null leaves of Assumption~\ref{as:R2} (equivalently, induce the same local observation law); $\tilde g$ descends to a positive-definite Riemannian metric $g$ on $M_S$. Within the present framework, a semantic state is defined as an equivalence class of observationally indistinguishable configurations; the quotient construction supplies the corresponding mathematical representation.
\end{definition}

\begin{definition}[D4: kinetic energy, line element, distance, geodesics]\label{def:geodesic}
$T_S:=\tfrac12 g_{ij}\dot z^i\dot z^j$; $ds^2=g_{ij}\,dz^i dz^j$; $d_g$ the induced geodesic distance. A \emph{geodesic} is a stationary point, at fixed endpoints, of the \emph{energy functional}
\begin{equation}
E[\gamma]:=\tfrac12\int_{t_0}^{t_1} g_{ij}(z)\,\dot z^i\dot z^j\,dt,
\label{eq:energyfunctional}
\end{equation}
the arc-length functional entering only through its equivalence with the energy functional.
\end{definition}

\begin{definition}[D5: invariant measure]\label{def:measure}
$d\mu:=\sqrt{\det g}\,dz$, the Riemannian volume measure.
\end{definition}

\subsubsection{Potential, self-information, energy}

\begin{definition}[D6: semantic configuration potential]\label{def:potential}
\begin{equation}
V(z):=\mathbb{E}_{x\sim\mathcal{D}_z}\big[-\log p_\psi(x\mid z)\big],
\label{eq:potential2}
\end{equation}
the expected predictive cost at $z$ (well defined by Assumption~\ref{as:R4}). It depends on the decoder and the environment, \emph{not} on the state frequency $p(z)$, and therefore does not use the state frequency in its definition.
\end{definition}

\begin{definition}[D7: self-information, a diagnostic]\label{def:selfinfo}
For a density $p$ relative to $d\mu$, $I(z):=-\log p(z)$. It enters no energy definition and is used only as a diagnostic.
\end{definition}

\begin{definition}[D8: semantic energy and Hamiltonian]\label{def:energy}
The dimensionless configuration energy is $\hat E(z):=V(z)/V_{\mathrm{ref}}$, with the reference value $V_{\mathrm{ref}}$ fixed by a designated baseline model; the phase-space Hamiltonian is
\begin{equation}
H_S(z,p):=\tfrac12\,g^{ij}(z)\,p_i p_j+V(z).
\label{eq:hamiltonian}
\end{equation}
Kinetic energy appears only through the phase-space object, so the configuration-level distribution contains $V$ alone. Computational cost is not part of this energy; it is a separate learning-scale quantity.
\end{definition}

\subsubsection{Semantic utility}

\begin{definition}[D9: four-channel semantic utility]\label{def:utility}
Each channel has a pointwise version $u_i(z)$ and an ensemble version $U_i:=\langle u_i\rangle_{p(z)}$. Here $\mathrm{KL}$ denotes the Kullback--Leibler divergence:
\begin{align}
u_{pred}(z)&:=\mathrm{KL}\big(p(x_{t+\tau:t+H}\mid z)\,\|\,p(x_{t+\tau:t+H})\big) & (\text{ensemble: } I(Z;X_{\mathrm{future}})),\\
u_{ctrl}(z)&:=\max_{\pi}\,I\big(A_{1:H};Z_H\mid Z_0=z\big) & (\text{empowerment; } 0 \text{ if } \mathcal{A}=\varnothing),\\
u_{comm}(z)&:=-\,\mathbb{E}\big[d_g(z,\hat z)^2\big] & (\hat z \text{ recovered at rate } \le R),\\
u_{reason}(z)&:=\big\langle u_{task}(\Psi^k(z))\big\rangle-u_{task}(z), &
\end{align}
The control channel $u_{ctrl}$ is the empowerment objective \cite{klyubin2005}. Here $u_{task}$ is a designated combination of $u_{pred},u_{ctrl}$. All channels are measured in nats, with $u_{comm}$ nondimensionalized through the information-geometric reading of $d_g$. Because $u_{pred}$ concerns future outcomes whereas $V$ concerns current reconstruction, the two quantities are defined from distinct targets; any residual statistical dependence must be assessed empirically.
\end{definition}

\subsubsection{Entropy and free energy}

\begin{definition}[D10: entropy and free-energy functional]\label{def:free}
For a density $q$ relative to $d\mu$,
\begin{equation}
S[q]:=-\int_{M_S}q\log q\,d\mu,\qquad F_S[q]:=\langle V\rangle_q-T\,S[q],\quad \langle V\rangle_q=\int_{M_S}V q\,d\mu.
\label{eq:freeenergy2}
\end{equation}
$F_S$ is a functional of $q$: both $\langle V\rangle_q$ and $S[q]$ depend on the distribution; $T$ is here a parameter.
\end{definition}

\subsubsection{Time scales, complexity, and path functional}

\begin{definition}[D11: two time scales]\label{def:scales}
\emph{Inference time} $t$: $z(t)$ evolves with parameters $(\phi,\psi)$ frozen. \emph{Learning time} $s$: parameters $\theta(s)$, or the latent distribution $q_s$, evolve. The dynamical statements below specify the time scale to which they apply.
\end{definition}

\begin{definition}[D12: computational-complexity functional]\label{def:complexity}
$C(\theta)$ is a functional, defined on the learning scale, that quantifies the computational cost of the model with parameters $\theta$ --- for example, its description length, parameter count, or number of floating-point operations. It is a property of the model rather than of the inference trajectory, and provides the computational-cost contribution excluded from the trajectory energy of Definition~\ref{def:energy}.
\end{definition}

\begin{definition}[D13: Onsager--Machlup (OM) functional]\label{def:om}
For a diffusion on $(M_S,g)$ with drift $b$ and temperature $T$, the Onsager--Machlup construction \cite{onsager1953,takahashi1981} gives
\begin{equation}
\mathcal{A}_{\mathrm{OM}}[z;b]:=\frac{1}{4T}\int_{t_0}^{t_1}\big\|\dot z-b(z)\big\|_g^2\,dt\ +\ \mathcal{R}[z],
\label{eq:omfunctional}
\end{equation}
where $\mathcal{R}$ collects the divergence and scalar-curvature corrections with the constant convention of \cite{takahashi1981}.
\end{definition}

\subsection{Axioms}\label{sec:axioms}

\begin{axiom}[A1: semantic least-energy principle, learning scale --- central axiom]\label{ax:A1}
Learning dynamics is a utility-constrained free-energy descent: $q_s$ (or $\theta(s)$) evolves along the descent flow of
\begin{equation}
\min_{\theta}\ F_S[q_\theta]+\kappa\,C(\theta)\qquad\text{s.t.}\qquad U_i[\theta]\ \ge\ c_i,\quad i\in\{pred,ctrl,comm,reason\},
\label{eq:A1}
\end{equation}
and its convergence is characterized as a (local) solution of this problem. The thresholds $c_i$ are specified by the task and are not estimated from the evaluation data.
\end{axiom}

\noindent\emph{Status.} Axiom~\ref{ax:A1} is adopted as an empirical postulate rather than derived within the present framework. It would be supported by learning-time decreases in semantic energy, cross-modal alignment, and empirically estimated multiplier-like effects of the utility constraints, and would be challenged by the failure criteria in Section~\ref{sec:falsify}.

\begin{axiom}[A2: geometry]\label{ax:A2}
Semantic distinguishability is measured by the Fisher metric $g$ of Definitions~\ref{def:premetric}--\ref{def:manifold}.
\end{axiom}

\noindent\emph{Status.} The metric is postulated rather than derived, and alternatives include output-space pullback and learned metrics. Under the invariance conditions of \v{C}encov's theorem, the Fisher metric is unique up to scale \cite{cencov1982,amari2016}. Natural-gradient methods also use Fisher geometry, providing a formal connection between the present metric choice and a class of learning dynamics. Position-dependent metrics are likewise used in Riemann-manifold Langevin and Hamiltonian Monte Carlo methods \cite{girolami2011}, although that sampling context is distinct from the present dynamical hypothesis.

\begin{axiom}[A3: inference dynamics]\label{ax:A3}
On the inference scale the latent state obeys the overdamped Langevin equation on $(M_S,g)$,
\begin{equation}
dz=-\nabla_{\!g}V(z)\,dt+\sqrt{2T}\,dW_{\!g},
\label{eq:langevin2}
\end{equation}
with $W_g$ Brownian motion on the manifold, together with the ergodicity of Assumption~\ref{as:R5}.
\end{axiom}

\noindent\emph{Status.} Axiom~\ref{ax:A3} postulates first-order stochastic dynamics motivated by gradient-based models of inference and learning. The associated Onsager--Machlup functional yields a second-order Euler--Lagrange equation at the path level. Empirical support would require the trajectory-concentration, weak-gradient geodesic, and affine self-information--potential relationships specified in Predictions~P2--P4.

\begin{axiom}[A4: planning variational principle --- restricted axiom]\label{ax:A4}
On goal-directed deliberation/planning segments with a target endpoint, the trajectory is a minimizer, at fixed horizon and fixed endpoints, of
\begin{equation}
\min_{z(\cdot)}\int_{t_0}^{t_1}E_S\,dt\qquad\text{s.t.}\qquad\int_{t_0}^{t_1}u_i\,dt\ \ge\ c_i\,(t_1-t_0).
\label{eq:A4}
\end{equation}
\end{axiom}

\noindent\emph{Status.} A scope-restricted postulate holding only on goal-directed segments, additional to Axiom~\ref{ax:A3}; their compatibility is examined under the additional conditions stated in Corollary~\ref{cor:C61}.

\subsection{Theorems and corollaries}\label{sec:theorems}

\subsubsection{Well-posedness of the geometry}

This subsection presents results related to the geometric component of Hypothesis~III.

\begin{theorem}[T1: well-defined and positive-definite metric]\label{thm:T1}
Under Assumptions~\ref{as:R1}--\ref{as:R2}, $\tilde g$ of \eqref{eq:premetric} is well defined, symmetric and positive semidefinite; its kernel distribution is smooth of constant rank and integrable; and $\tilde g$ descends to a positive-definite Riemannian metric $g$ on $M_S=Z/\!\sim$ (Definition~\ref{def:manifold}).
\end{theorem}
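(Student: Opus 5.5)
The plan is to split the statement into three parts: the pointwise tensor properties of $\tilde g$, the properties of its kernel distribution, and the descent of $\tilde g$ to the quotient $M_S$.

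\emph{Pointwise properties.} By Assumption~\ref{as:R1} each score component $\partial_i\log p_\psi(\cdot\mid z)$ lies in $L^2(p_\psi(\cdot\mid z))$. Cauchy--Schwarz then makes every entry $\tilde g_{ij}(z)$ finite, so $\tilde g$ is well defined. Symmetry is immediate from the symmetry of the integrand in $(i,j)$. Positive semidefiniteness follows by contracting with an arbitrary $v\in\mathbb{R}^n$:
\[
\tilde g_{ij}(z)v^iv^j=\mathbb{E}\big[(v^i\partial_i\log p_\psi)^2\big]\ge 0 .
\]
Smoothness of $z\mapsto\tilde g(z)$ comes from the domination clause of Assumption~\ref{as:R1}, which lets us differentiate under the integral. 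I would also note two consequences of the same clause. First, $\mathbb{E}[\partial_i\log p_\psi]=0$, so $\tilde g$ is the score covariance. Second, it gives the characterization of the kernel: $v\in\ker\tilde g(z)$ if and only if $v^i\partial_i p_\psi(\cdot\mid z)=0$ almost everywhere. In words, null directions are exactly the directions in which the local observation law does not change to first order.

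\emph{The kernel distribution.} Smoothness, constant rank and involutivity of $N_z=\ker\tilde g(z)$ are exactly the content of Assumption~\ref{as:R2}, so this part is a citation rather than a derivation. I would add one supporting remark: if $v,w$ are null fields, then $v^i\partial_i p_\psi=w^i\partial_i p_\psi=0$, and differentiating gives $[v,w]^i\partial_i p_\psi=0$. Hence involutivity actually holds automatically once constant rank is granted. Frobenius's theorem then gives a foliation by null leaves, and Assumption~\ref{as:R2} asserts that the leaf space $M_S=Z/\!\sim$ is a smooth manifold. We take $\pi:Z\to M_S$ to be the quotient map, which is a submersion.

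\emph{Descent to $M_S$.} This is the main step. For $u\in T_{\pi(z)}M_S$, choose any lift $\tilde u\in T_zZ$ and set
\[
g_{\pi(z)}(u,u):=\tilde g_z(\tilde u,\tilde u).
\]
Two facts must be checked. The first is that this is independent of the lift within $T_zZ$. Lifts differ by an element of $N_z$, and for $n\in N_z$ the vector $\tilde g_z n$ vanishes; since $\tilde g$ is positive semidefinite, the value is unchanged. The second is that it is independent of the base point along a leaf. Here I would show that the Lie derivative $\mathcal{L}_v\tilde g$ vanishes for null fields $v$. The plan is to differentiate $\tilde g_{ij}$ along $v$ under the integral, which is justified by Assumption~\ref{as:R1}. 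Using $v^k\partial_k p_\psi=0$ together with its derivatives, one finds that $p_\psi$ is constant along leaves, so the score and its expectation transport invariantly. With both facts in hand, $g$ is a well-defined smooth symmetric tensor on $M_S$.

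\emph{Positive definiteness.} If $g(u,u)=0$, then the lift $\tilde u$ lies in $N_z$, so $u=d\pi(\tilde u)=0$. Hence $g$ is Riemannian.

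I expect the invariance along leaves to be the main obstacle. It requires passing from the infinitesimal null condition to constancy of $p_\psi$ along each leaf. The argument is that the derivative of $p_\psi$ along any curve inside a leaf vanishes, which uses connectedness of leaves, and that leaf-tangent derivatives of the score integrate to the Lie-derivative identity. This is also where the definitional equivalence in Definition~\ref{def:manifold} (``same local observation law'') gets justified.
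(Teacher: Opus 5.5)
Your proposal is correct and has the same skeleton as the paper's proof: Cauchy--Schwarz and a sum of squares for finiteness, symmetry and semidefiniteness; the kernel as the directions along which $p_\psi(\cdot\mid z)$ is unchanged to first order; descent to $M_S$; and definiteness because a nonzero quotient vector lifts outside $N_z$.

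Where you differ is the descent step. The paper settles it in one sentence: $\tilde g_{ij}(z)$ is \emph{a function of the law $p_\psi(\cdot\mid z)$ alone}, and $\sim$-related points carry the same law by Definition D3, hence $\tilde g$ \emph{takes equal values} there. Read literally this is only heuristic. The tensor $\tilde g$ depends on the first-order variation of the law, and its components need not agree along a leaf. For a unit-variance Gaussian with mean $\mu(z)=z_1e^{z_2}$ one gets $\tilde g=d\mu\otimes d\mu$, and $\tilde g_{11}=e^{2z_2}$ varies along the leaf $\mu=1$.

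What actually has to be shown is that $\tilde g$ is basic for the null foliation, i.e.\ $\iota_v\tilde g=0$ and $\mathcal{L}_v\tilde g=0$ for null fields $v$. That is exactly what your lift-independence and Lie-derivative checks deliver. Concretely, write $\tilde g(a,b)=\int p_\psi^{-1}a(p_\psi)\,b(p_\psi)\,dx$ and use $v(p_\psi)\equiv0$. Differentiation under the integral then gives $v\big(\tilde g(a,b)\big)=\tilde g([v,a],b)+\tilde g(a,[v,b])$, and connectedness of leaves propagates this along flows of null fields.

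Your route buys two further things; the paper's route buys only brevity. First, it shows that involutivity in R2 is automatic once smoothness and constant rank hold. Second, it derives the implication \emph{same leaf $\Rightarrow$ same law} that the descent needs. Claim only this direction: the converse in D3 can fail globally. For a von Mises family with mean direction $z\in\mathbb{R}$, the leaves are points, yet $z$ and $z+2\pi$ give the same law.

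One detail to make explicit concerns the null condition $v^k\partial_kp_\psi(x\mid z)=0$. It holds only for almost every $x$ at each $z$. Both your bracket computation and the constancy along curves differentiate it in $z$ at fixed $x$, so you need an exceptional set independent of $z$. You get one from three facts:
\begin{itemize}
\item the support is $z$-independent and $p_\psi>0$, so all the laws are mutually absolutely continuous;
\item $p_\psi$ is continuous in $z$;
\item you can take the union of the exceptional sets over a countable dense set of base points.
\end{itemize}
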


\begin{proof}
Write $s_i(x,z)=\partial_i\log p_\psi(x\mid z)$. By Assumption~\ref{as:R1} each $s_i(\cdot,z)\in L^2(p_\psi(\cdot\mid z))$, so by Cauchy--Schwarz $|\tilde g_{ij}|\le \|s_i\|_2\|s_j\|_2<\infty$ and $\tilde g$ is finite and symmetric. For any $\xi\in\mathbb{R}^n$,
\[
\xi^i\xi^j\tilde g_{ij}(z)=\mathbb{E}\big[(\xi^i s_i)^2\big]\ge 0,
\]
so $\tilde g\succeq0$. Domination (Assumption~\ref{as:R1}) permits differentiating $\int p_\psi\,dx=1$ under the integral twice, giving $\int\partial_i p_\psi\,dx=\int\partial_i\partial_j p_\psi\,dx=0$; since $\partial_i\partial_j\log p_\psi=(\partial_i\partial_j p_\psi)/p_\psi-s_is_j$, taking expectations yields the equivalent Hessian form $\tilde g_{ij}=\mathbb{E}[s_is_j]=-\mathbb{E}[\partial_i\partial_j\log p_\psi]$, confirming that the two standard expressions coincide.

Next, $\xi\in\ker\tilde g(z)$ if and only if $\mathbb{E}[(\xi^is_i)^2]=0$, which holds if and only if $\xi^is_i(x,z)=0$ for $p_\psi$-almost every $x$; that is, the directional derivative $D_\xi\log p_\psi(\cdot\mid z)$ vanishes almost everywhere. Equivalently, $p_\psi(\cdot\mid z)$ is unchanged to first order along $\xi$. By Assumption~\ref{as:R2} the distribution $z\mapsto N_z=\ker\tilde g(z)$ is smooth, of constant rank $r$, and involutive; by the Frobenius theorem \cite{lee2018} it integrates to a foliation whose leaf space $M_S=Z/\!\sim$ is a smooth manifold with submersion $\pi:Z\to M_S$. Because $\tilde g_{ij}(z)$ is a function of the law $p_\psi(\cdot\mid z)$ alone, and equivalent points carry the same law (Definition~\ref{def:manifold}), $\tilde g$ takes equal values at $\sim$-related points and annihilates leaf directions; hence there is a unique symmetric $2$-tensor $g$ on $M_S$ with $\pi^*g=\tilde g$. Finally $g\succ0$: a nonzero $w\in T_{[z]}M_S$ lifts to a vector $\hat w\in T_zZ$ transverse to $N_z$, so $g(w,w)=\tilde g(\hat w,\hat w)>0$ since $\hat w\notin\ker\tilde g$. Thus $g$ is a Riemannian metric.
\end{proof}

\subsubsection{Semantic thermodynamics}\label{sec:thermo}

The following results concern the thermodynamic formulation in Hypothesis~IV.

\begin{theorem}[T2: derivation of the invariant measure and configuration law]\label{thm:T2}
Under Definitions~\ref{def:measure},~\ref{def:potential},~\ref{def:energy} and Assumption~\ref{as:R3}, the configuration marginal of the phase-space canonical distribution $\rho\propto e^{-H_S/T}$ (relative to $dz\,dp$) is
\begin{equation}
dP(z)=\frac{1}{Z}\,e^{-V(z)/T}\,d\mu,\qquad Z=\int_{M_S}e^{-V/T}\,d\mu<\infty.
\label{eq:T2}
\end{equation}
\end{theorem}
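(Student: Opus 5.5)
The plan is to integrate out the momenta in local coordinates and recognize the Riemannian volume density. In a chart $z=(z^1,\dots,z^n)$ on $M_S$ with conjugate momenta $p_i$, the canonical density relative to Lebesgue measure $dz\,dp$ is $\rho(z,p)=\mathcal{N}^{-1}e^{-H_S(z,p)/T}$, with $H_S$ the Hamiltonian \eqref{eq:hamiltonian}. Because $H_S$ is a sum of a kinetic term and $V(z)$, the exponential factorizes:
\[
e^{-H_S/T}=e^{-V(z)/T}\,\exp\!\Big(-\tfrac{1}{2T}g^{ij}(z)p_ip_j\Big).
\]
The configuration marginal is obtained by integrating over $p\in\mathbb{R}^n$ at fixed $z$.

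The key step is the Gaussian integral over the fibre. By Assumption~\ref{as:R3}, $g(z)$ is uniformly positive definite on the working region, and hence so is $g^{-1}(z)$. The standard Gaussian formula therefore gives
\[
\int_{\mathbb{R}^n}\exp\!\Big(-\tfrac{1}{2T}p^\top g^{-1}(z)p\Big)\,dp=(2\pi T)^{n/2}\big(\det g^{-1}(z)\big)^{-1/2}=(2\pi T)^{n/2}\sqrt{\det g(z)}.
\]
The marginal density relative to $dz$ is then proportional to $e^{-V(z)/T}\sqrt{\det g(z)}$. By Definition~\ref{def:measure}, this is exactly $e^{-V/T}\,d\mu$. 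The constant $(2\pi T)^{n/2}$ is absorbed into the normalization, which yields \eqref{eq:T2}.

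Two technical points remain.

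\textbf{Chart independence.} The Liouville measure $dz\,dp$ is invariant under cotangent-lifted coordinate changes: the momenta transform by the inverse transpose Jacobian, so the two Jacobian determinants cancel. The scalar $H_S$ is also coordinate-invariant. The local computation therefore defines a global measure on $M_S$, with local expression $e^{-V/T}\sqrt{\det g}\,dz$, and this patches consistently because $\sqrt{\det g}\,dz$ is the Riemannian volume. To write the marginal globally, I would use a partition of unity subordinate to an atlas of $M_S$, whose existence and metric $g$ are supplied by Theorem~\ref{thm:T1}.

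\textbf{Normalizability.} One needs $\mathcal{N}=\int e^{-H_S/T}\,dz\,dp<\infty$. By Tonelli's theorem, $\mathcal{N}$ equals $(2\pi T)^{n/2}\int_{M_S}e^{-V/T}\,d\mu$, which is finite by the normalizability clause of Assumption~\ref{as:R3}. It is also positive, since $V$ is bounded below and $\mu$ is nontrivial. Hence $Z<\infty$ and the canonical distribution is a well-defined probability measure, so the marginalization is legitimate.

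I expect the only real obstacle to be justifying that $dz\,dp$ is the correct reference measure on the cotangent bundle, i.e.\ the Liouville invariance argument. That is precisely what produces the $\sqrt{\det g}$ factor rather than a coordinate artifact. The rest is a Gaussian computation together with Fubini–Tonelli.
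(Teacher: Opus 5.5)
Your proposal is correct and follows the same route as the paper's proof. You integrate out the momenta with the Gaussian formula to get $(2\pi T)^{n/2}\sqrt{\det g(z)}$, absorb the constant into the normalizer, identify $\sqrt{\det g}\,dz$ with $d\mu$, and take finiteness of $Z$ from Assumption~\ref{as:R3}; your Liouville-invariance and Tonelli remarks add care the paper leaves implicit, though positivity of $\mathcal{N}$ comes from $e^{-V/T}>0$ pointwise, not from $V$ being bounded below (that only bounds the integrand above).
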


\begin{proof}
Integrate $\rho$ over the momenta. With $g^{ij}$ the inverse metric (positive definite by Theorem~\ref{thm:T1}), the exponent is $-H_S/T=-\tfrac{1}{2}p^\top(g^{-1}/T)p-V/T$, so
\[
\int_{\mathbb{R}^n}e^{-\frac{1}{2}p^\top(g^{-1}/T)p}\,dp=(2\pi)^{n/2}\big(\det(g^{-1}/T)\big)^{-1/2}=(2\pi T)^{n/2}\sqrt{\det g(z)},
\]
using the Gaussian formula $\int_{\mathbb{R}^n}e^{-\frac12 p^\top A p}dp=(2\pi)^{n/2}(\det A)^{-1/2}$ with $A=g^{-1}/T$ and $\det(g^{-1}/T)=1/(T^n\det g)$. Hence the $z$-marginal density with respect to $dz$ is proportional to $e^{-V(z)/T}(2\pi T)^{n/2}\sqrt{\det g}$; the constant $(2\pi T)^{n/2}$ is absorbed into the normalizer, and $\sqrt{\det g}\,dz=d\mu$ by Definition~\ref{def:measure}, giving \eqref{eq:T2}. Finiteness of $Z$ is Assumption~\ref{as:R3}.
\end{proof}

\begin{corollary}[C2.1: reference measure]\label{cor:C21}
All densities in this chapter are taken relative to $d\mu$; consequently $I(z)=-\log(dP/d\mu)$ is a coordinate-invariant scalar. The exponent of \eqref{eq:T2} contains $V$ alone.
\end{corollary}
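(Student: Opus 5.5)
The corollary makes three claims, and I will handle them in order: the convention on reference measures, the coordinate invariance of $I$, and the form of the exponent.

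\textbf{Convention.} This is a statement of convention. It is justified by checking that the objects in Definitions~\ref{def:selfinfo} and~\ref{def:free} are already declared as densities relative to $d\mu$. Theorem~\ref{thm:T2} then supplies the distinguished probability measure $dP$, whose density relative to $d\mu$ is
\[
\frac{dP}{d\mu}(z)=\frac{1}{Z}e^{-V(z)/T}.
\]
This density exists and is strictly positive because $V\in C^2$ is finite, $Z<\infty$ by Assumption~\ref{as:R3}, and $d\mu$ has a smooth positive density $\sqrt{\det g}$ in any chart (Theorem~\ref{thm:T1}).

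\textbf{Coordinate invariance.} Let $w=\varphi(z)$ be a smooth change of chart on $M_S$ with Jacobian $J$.
\begin{itemize}
\item The metric transforms as $g\mapsto J^{-\top}gJ^{-1}$, so $\sqrt{\det g}\,dz$ is unchanged by the change-of-variables formula. Hence $d\mu$ is chart-independent.
\item $dP$ is a chart-independent measure, so the Radon--Nikodym derivative $dP/d\mu$ is a chart-independent function on $M_S$. It is unique up to $\mu$-null sets, and we fix the continuous version.
\item Consequently $I=-\log(dP/d\mu)$ is a scalar. Concretely, $I(z)=V(z)/T+\log Z$.
\end{itemize}
It remains to confirm that $V$ is itself a scalar on $M_S$. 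By Definition~\ref{def:manifold}, $\sim$-equivalent points induce the same observation law. I will use Assumption~\ref{as:R4} to argue that the local law $\mathcal{D}_z$ is attached to the semantic state, so that $V$ descends to $M_S$. This descent step is the main obstacle: the definition only guarantees equality of $p_\psi(\cdot\mid z)$ along leaves, not equality of $\mathcal{D}_z$. If needed, I will state as a reading of Definition~\ref{def:potential} that $\mathcal{D}_z$ is defined on $M_S$.

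\textbf{The exponent.} The last claim follows from the marginalization in the proof of Theorem~\ref{thm:T2}. The Gaussian momentum integral produces exactly the factor $\sqrt{\det g}$, which is absorbed into $d\mu$; the constant $(2\pi T)^{n/2}$ is absorbed into $Z$. This is where $d\mu$ as reference measure matters. Had the density been taken relative to $dz$, the exponent would have been
\[
-V/T+\tfrac12\log\det g,
\]
which is not a scalar. Relative to $d\mu$, no kinetic or metric term survives in the exponent, so it contains $V$ alone, consistent with Definition~\ref{def:energy}.
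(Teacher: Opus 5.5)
Your proposal is correct and follows the paper's argument. The paper's proof likewise reads $dP/d\mu = Z^{-1}e^{-V/T}$ off Theorem~\ref{thm:T2}, calls it a ratio of measures and hence invariant, and concludes that $I = V/T + \log Z$ is a scalar, with the $\sqrt{\det g}$ factor absorbed into $d\mu$. The descent of $V$ to $M_S$ that you flag is simply asserted by the paper from Definition~\ref{def:potential}, and it is already presupposed by Theorem~\ref{thm:T2}, which integrates $e^{-V/T}$ over $M_S$, so your fallback stipulation matches the paper's usage.
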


\begin{proof}
Immediate from \eqref{eq:T2}: $dP/d\mu=Z^{-1}e^{-V/T}$ is a ratio of measures and hence invariant under coordinate change, and $V$ is a scalar (Definition~\ref{def:potential}), so $I=-\log(dP/d\mu)=V/T+\log Z$ is invariant.
\end{proof}

\begin{theorem}[T3: Gibbs variational principle]\label{thm:T3}
Under Definitions~\ref{def:potential},~\ref{def:free} and Assumption~\ref{as:R3}, for every $d\mu$-normalized density $q$,
\begin{equation}
F_S[q]=F_S[q^*]+T\,\mathrm{KL}\big(q\,\|\,q^*\big),\qquad q^*=\frac{1}{Z}e^{-V/T},\qquad F_S[q^*]=-T\log Z,
\label{eq:T3}
\end{equation}
so $q^*$ is the unique minimizer of $F_S$.
\end{theorem}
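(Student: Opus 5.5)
The identity follows from writing $\log q^*$ explicitly and inserting it into the definition of $\mathrm{KL}$; uniqueness is then Gibbs' inequality. Set $q^*=Z^{-1}e^{-V/T}$ with $Z=\int_{M_S}e^{-V/T}\,d\mu$. By Assumption~\ref{as:R3}, $Z<\infty$, and $Z>0$ because the integrand is strictly positive. Hence $q^*$ is a well-defined $d\mu$-normalized density, namely the configuration law of Theorem~\ref{thm:T2}. Its logarithm is
\[
\log q^*=-V/T-\log Z .
\]

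\textbf{Step 1 (the identity).} For a $d\mu$-normalized density $q$, write
\[
\mathrm{KL}(q\,\|\,q^*)=\int q\log q\,d\mu-\int q\log q^*\,d\mu .
\]
Substituting $\log q^*$ and using $\int q\,d\mu=1$ gives
\[
\mathrm{KL}(q\,\|\,q^*)=-S[q]+\tfrac1T\langle V\rangle_q+\log Z .
\]
Multiplying by $T$ and recalling $F_S[q]=\langle V\rangle_q-T\,S[q]$ from Definition~\ref{def:free}, this reads
\[
F_S[q]=-T\log Z+T\,\mathrm{KL}(q\,\|\,q^*).
\]
Taking $q=q^*$, where the divergence vanishes, shows $F_S[q^*]=-T\log Z$. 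This gives \eqref{eq:T3}.

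\textbf{Step 2 (uniqueness).} Gibbs' inequality, i.e.\ Jensen applied to $-\log$ relative to the probability measure $q\,d\mu$, gives $\mathrm{KL}(q\,\|\,q^*)\ge0$. Equality holds if and only if $q=q^*$ $\mu$-a.e. Since $T>0$, this yields $F_S[q]\ge F_S[q^*]$ with equality exactly when $q=q^*$, so $q^*$ is the unique minimizer up to $\mu$-null sets.

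\textbf{Main obstacle (integrability).} The algebra in Step 1 silently splits an integral into pieces, which needs care when some of them are infinite. $V$ is bounded below (Assumption~\ref{as:R3}), so $\langle V\rangle_q\in(-\infty,+\infty]$ is always defined. However, $S[q]$ may be $\pm\infty$ on a noncompact $M_S$. I would proceed as follows.

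\begin{enumerate}
\item Restrict first to the class of $q$ with $\langle V\rangle_q<\infty$ and $\int q|\log q|\,d\mu<\infty$. On this class every term is finite, and the identity holds verbatim.
\item For general $q$, define $F_S[q]$ through the identity itself, with the value $+\infty$ allowed. The integrand $q\log(q/q^*)$ is bounded below by $q-q^*$, so $\mathrm{KL}(q\,\|\,q^*)$ is always well defined in $[0,\infty]$ and the right-hand side always makes sense.
\item Note that $q^*$ itself lies in the good class. Finiteness of $\langle V\rangle_{q^*}$ follows from the bound
\[
Ve^{-V/T}\le C_\epsilon\, e^{-(1-\epsilon)V/T},
\]
but integrability of the right-hand side at temperature $T/(1-\epsilon)$ is slightly stronger than Assumption~\ref{as:R3}. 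If this fails, I would either add that integrability as a stated proviso or treat the conventions for infinite values explicitly.
\end{enumerate}
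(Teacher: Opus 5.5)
Your proposal is correct and follows the paper's argument: you substitute $\log q^*=-V/T-\log Z$ into $\mathrm{KL}(q\,\|\,q^*)$, whereas the paper substitutes $V=-T\log q^*-T\log Z$ into $F_S$, which is the same computation; using $\int q\,d\mu=1$ gives $F_S[q]=T\,\mathrm{KL}(q\,\|\,q^*)-T\log Z$, and uniqueness follows from Gibbs' inequality. Your integrability discussion goes beyond the paper, which splits the integrals without comment, and your observation that Assumption~\ref{as:R3} alone does not guarantee $\langle V\rangle_{q^*}<\infty$ (so $F_S[q^*]$ could be of the form $\infty-\infty$) is a correct caveat that the paper leaves implicit.
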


\begin{proof}
From $q^*=Z^{-1}e^{-V/T}$ we have $V=-T\log q^*-T\log Z$. Substituting into \eqref{eq:freeenergy2} and using $\int q\,d\mu=1$,
\[
F_S[q]=\int V q\,d\mu+T\int q\log q\,d\mu
=-T\!\int q\log q^*\,d\mu-T\log Z+T\!\int q\log q\,d\mu
=T\,\mathrm{KL}(q\|q^*)-T\log Z.
\]
Taking $q=q^*$ gives $F_S[q^*]=-T\log Z$, whence \eqref{eq:T3}. Since $\mathrm{KL}(q\|q^*)\ge0$ with equality if and only if $q=q^*$ $\mu$-almost everywhere (Gibbs' inequality, strict by strict convexity of $u\mapsto u\log u$), $q^*$ is the unique minimizer.
\end{proof}

\subsubsection{Dynamics: stationary law and most-probable paths}\label{sec:dynamics}

These results concern the trajectory formulation in Hypothesis~II and its relations to Hypotheses~III and~IV.

\begin{theorem}[T4: stationary law and semantic equilibrium]\label{thm:T4}
Under Axiom~\ref{ax:A3} and Assumptions~\ref{as:R3},~\ref{as:R5}, the Gibbs density $q^*$ of Theorem~\ref{thm:T3} is the unique stationary distribution of \eqref{eq:langevin2}, and time averages converge: for $f\in L^1(q^*)$, $\tfrac1{\mathcal{T}}\int_0^{\mathcal{T}}f(z_t)\,dt\to\langle f\rangle_{q^*}$ almost surely as $\mathcal{T}\to\infty$. ``Semantic equilibrium'' is this stationary law with its fluctuations.
\end{theorem}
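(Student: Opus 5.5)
The plan is to verify stationarity of $q^*$ through the Fokker--Planck equation associated with \eqref{eq:langevin2}, then obtain uniqueness and the ergodic theorem from Assumptions~\ref{as:R3} and~\ref{as:R5}.

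\textbf{Stationarity.} On $(M_S,g)$, with $W_g$ Brownian motion for the Riemannian metric given by Theorem~\ref{thm:T1}, the generator of \eqref{eq:langevin2} is
\[
\mathcal{L}f=-\langle\nabla_g V,\nabla_g f\rangle_g+T\,\Delta_g f,
\]
where $\Delta_g$ is the Laplace--Beltrami operator. For a density $\rho$ relative to $d\mu$, the adjoint with respect to $d\mu$ gives the Fokker--Planck equation
\[
\partial_t\rho=\mathrm{div}_g\big(\rho\,\nabla_g V+T\,\nabla_g\rho\big)=T\,\mathrm{div}_g\big(q^*\,\nabla_g(\rho/q^*)\big),
\]
with $q^*=Z^{-1}e^{-V/T}$. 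The second form shows that the probability current vanishes identically at $\rho=q^*$, so $q^*$ is stationary. It is normalizable by Assumption~\ref{as:R3}, and it is the same $d\mu$-density as in Theorem~\ref{thm:T2} and Theorem~\ref{thm:T3}.

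Equivalently, $\mathcal{L}$ is symmetric in $L^2(q^*\,d\mu)$:
\[
\int f\,\mathcal{L}h\,q^*\,d\mu=-T\int\langle\nabla_g f,\nabla_g h\rangle_g\,q^*\,d\mu,
\]
so $\int\mathcal{L}f\,q^*\,d\mu=0$ for all test functions $f$. Justifying the integration by parts is where Assumption~\ref{as:R3} enters. Uniform ellipticity and boundedness of $g$, together with $V\in C^2$ bounded below and $e^{-V/T}$ integrable, control the boundary terms on the working region, either by no-flux or reflecting boundary conditions or by a cutoff argument with $\mu$-integrable decay.

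\textbf{Uniqueness and time averages.} Suppose $\nu$ is another stationary probability measure. Uniform ellipticity (Assumption~\ref{as:R3}) and hypoellipticity imply that $\nu$ has a smooth density. Combined with irreducibility from the nondegenerate noise, this forces $\nu=q^*$, since two distinct invariant measures of an irreducible strong-Feller diffusion would be mutually singular, which is impossible when both have positive densities. The paper instead postulates ergodicity in Assumption~\ref{as:R5}, and I would lean on it directly: an ergodic invariant measure is extremal, and combined with the strong-Feller and irreducibility property above, it is the unique invariant measure. The almost-sure convergence
\[
\frac{1}{\mathcal{T}}\int_0^{\mathcal{T}}f(z_t)\,dt\to\langle f\rangle_{q^*}
\]
for $f\in L^1(q^*)$ then follows from Birkhoff's ergodic theorem applied to the stationary process started from $q^*$. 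It extends to $q^*$-almost every initial condition, and to every initial condition via the strong Feller property.

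\textbf{Main obstacle.} The delicate step is uniqueness and the passage from stationarity to convergence for arbitrary initial conditions. Rigorous versions need non-explosion and a Lyapunov condition: for instance $\mathcal{L}V\le -c\,\|\nabla_g V\|^2+C$, which follows from confinement. I would state this as part of what ``confining potential'' in Assumption~\ref{as:R3} supplies, or else rely on Assumption~\ref{as:R5} explicitly. The final sentence about ``semantic equilibrium'' is definitional and requires no proof.
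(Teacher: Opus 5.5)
Your proposal is correct and follows essentially the paper's proof: the zero-current Fokker--Planck check for $q^*$, symmetry of $\mathcal{L}$ on $L^2(q^*\,d\mu)$ by integration by parts, uniqueness from ellipticity, and Birkhoff's theorem under Assumption~\ref{as:R5} for the process started from $q^*$. Your strong-Feller-plus-irreducibility argument for uniqueness is a more explicit substitute for the paper's appeal to reversibility and ellipticity.

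Two small caveats. First, mutual singularity holds only for distinct \emph{ergodic} invariant measures, so that step should be run through ergodic components. Second, your extension of the time-average limit to \emph{every} initial point is valid for bounded $f$ but can fail for general $f\in L^1(q^*)$: in dimension at least three, an integrable singularity at the starting point makes $\int_0^{\mathcal{T}}f(z_t)\,dt$ infinite almost surely. The theorem does not need that extension.
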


\begin{proof}
The generator of \eqref{eq:langevin2} is $L=-\langle\nabla_{\!g}V,\nabla_{\!g}\,\cdot\,\rangle_g+T\Delta_g$, with $\Delta_g$ the Laplace--Beltrami operator. Its adjoint with respect to $d\mu$ acts on densities by the Fokker--Planck operator $L^\dagger\rho=\operatorname{div}_g\!\big(T\nabla_{\!g}\rho+\rho\,\nabla_{\!g}V\big)$. For $\rho=q^*\propto e^{-V/T}$ one has $\nabla_{\!g}q^*=-q^*\nabla_{\!g}V/T$, so $T\nabla_{\!g}q^*+q^*\nabla_{\!g}V=0$ and $L^\dagger q^*=0$: $q^*$ is stationary. Moreover $L$ is symmetric and nonpositive on $L^2(q^*\,d\mu)$, since integration by parts (no boundary term, the working region being confining by Assumption~\ref{as:R3}) gives
\[
\int (Lf)\,h\,q^*\,d\mu=-T\int\langle\nabla_{\!g}f,\nabla_{\!g}h\rangle_g\,q^*\,d\mu=\int f\,(Lh)\,q^*\,d\mu,
\]
so the diffusion is reversible with respect to $q^*$; reversibility plus ellipticity (Assumption~\ref{as:R3}) give uniqueness of the stationary measure \cite{bakry2014}. Ergodicity (Assumption~\ref{as:R5}) then yields the almost-sure convergence of time averages by Birkhoff's theorem.
\end{proof}

\begin{corollary}[C4.1: identifiability of the temperature]\label{cor:C41}
At stationarity $I(z)=V(z)/T+\log Z$; the model predicts an affine relationship between $\hat I$ and $\hat V$ across states, and the reciprocal of the estimated slope provides an estimator of $T$.
\end{corollary}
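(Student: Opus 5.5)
The plan is to read the stationary self-information directly off Theorem~\ref{thm:T4}. Under Axiom~\ref{ax:A3} and Assumptions~\ref{as:R3},~\ref{as:R5}, the stationary law of \eqref{eq:langevin2} is the Gibbs density $q^*=Z^{-1}e^{-V/T}$, taken relative to $d\mu$ as fixed by Corollary~\ref{cor:C21}. Substituting $p=q^*$ into Definition~\ref{def:selfinfo} then gives
\[
I(z)=-\log q^*(z)=V(z)/T+\log Z ,
\]
which is the first claim. This is exactly the identity already obtained in the proof of Corollary~\ref{cor:C21}; the new ingredient is only that Theorem~\ref{thm:T4} identifies $q^*$ as the law actually realized by the inference dynamics, not merely the configuration marginal of the canonical ensemble.

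Next I pass to the estimators. Write $\hat E=V/V_{\mathrm{ref}}$ as in Definition~\ref{def:energy}, or simply $\hat V=V$ in nats. The identity above is then affine across states, $I=\alpha\hat V+\beta$ with slope $\alpha=1/T$ and intercept $\beta=\log Z$. If normalized energy is used, the slope becomes $V_{\mathrm{ref}}/T$, and the temperature estimator must be rescaled accordingly; I would state this explicitly.

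For the empirical quantities:
\begin{itemize}
\item $\hat V(z)$ is a Monte Carlo average of $-\log p_\psi(x\mid z)$ over the local observation law $\mathcal{D}_z$, which is well defined by Assumption~\ref{as:R4}.
\item $\hat I(z)=-\log\hat q(z)$ is computed from a density estimate $\hat q$ built from a long inference trajectory.
\end{itemize}
The ergodic clause of Theorem~\ref{thm:T4} is what justifies $\hat q$. Taking $f$ to be indicators of small $\mu$-balls, time averages converge almost surely to $q^*$-probabilities, and dividing by the $\mu$-volume of the ball gives $\hat q\to q^*$. It follows that a regression of $\hat I$ on $\hat V$ has population slope $1/T$, so the reciprocal of the slope is a consistent estimator of $T$.

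The main obstacle is this estimation step. It requires two things:
\begin{itemize}
\item the density estimate must be taken relative to $d\mu$ rather than $dz$, which otherwise introduces a $\tfrac12\log\det g$ term that breaks affinity;
\item $V$ must be nonconstant on the sampled states, so that the slope is identifiable.
\end{itemize}
I would handle the first by computing $\mu$-volumes with the Fisher metric of Theorem~\ref{thm:T1}. For the second, I would add nondegeneracy of $V$ as an explicit condition, since if $V$ is constant the slope, and hence $T$, is not determined.
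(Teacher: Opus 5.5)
Your proposal is correct and follows the paper's proof. Theorem~\ref{thm:T4} gives the stationary density $q^*=Z^{-1}e^{-V/T}$ relative to $d\mu$; substituting into the self-information, as in Corollary~\ref{cor:C21}, gives $I=V/T+\log Z$, from which the slope $1/T$ is read off. Your extra points are valid refinements that the paper's one-line argument leaves implicit: the slope becomes $V_{\mathrm{ref}}/T$ when $\hat V$ is normalized, densities must be taken relative to $d\mu$ rather than $dz$, the ergodic clause justifies $\hat q$, and $V$ must be nonconstant for the slope to be identifiable.
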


\begin{proof}
By Theorem~\ref{thm:T4} the stationary density relative to $d\mu$ is $q^*=Z^{-1}e^{-V/T}$, so $I(z)=-\log q^*(z)=V(z)/T+\log Z$ (Corollary~\ref{cor:C21}). This is affine in $V$ with slope $1/T$; hence $\hat T=(\text{slope})^{-1}$.
\end{proof}

\begin{theorem}[T5: path-variational characterization]\label{thm:T5}
Under Axiom~\ref{ax:A3}, Assumptions~\ref{as:R1},~\ref{as:R3} and Definition~\ref{def:om}, the endpoint-conditioned law of \eqref{eq:langevin2} obeys an Onsager--Machlup / small-noise large-deviation principle with rate $\mathcal{A}_{\mathrm{OM}}[z;-\nabla_{\!g}V]$; its most probable paths are the minimizers, and their Euler--Lagrange equation is the (second-order, dissipatively drifted) semantic trajectory equation.
\end{theorem}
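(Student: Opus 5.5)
The proof has three parts. First, apply the standard Onsager--Machlup theorem on a Riemannian manifold. Second, identify the small-noise rate with the quadratic part of the functional. Third, compute its Euler--Lagrange equation explicitly.

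\textbf{Step 1: OM functional.} Under Axiom~\ref{ax:A3} the drift is $b=-\nabla_{\!g}V$, which is $C^1$ because $V\in C^2$ (Assumption~\ref{as:R3}). Theorem~\ref{thm:T1} and Assumption~\ref{as:R3} make $(M_S,g)$ a Riemannian manifold with uniformly elliptic, bounded metric. The theorem of Takahashi--Watanabe \cite{takahashi1981} then states that, for any $C^2$ path $\varphi$,
\[
\lim_{\epsilon\to0}\frac{P\big(\sup_t d_g(z_t,\varphi_t)<\epsilon\big)}{P\big(\sup_t d_g(w_t,\varphi_0)<\epsilon\big)}=\exp\big(-\mathcal{A}_{\mathrm{OM}}[\varphi;b]\big),
\]
where $w$ is the driftless Brownian motion. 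Here $\mathcal{R}[\varphi]=\tfrac12\int(\operatorname{div}_g b-\tfrac{T}{6}\,\mathrm{Scal}_g)\,dt$ in their convention. Conditioning on the endpoints only restricts the admissible path class. So the most probable tubes of the endpoint-conditioned law are the minimizers of $\mathcal{A}_{\mathrm{OM}}$ over paths with fixed endpoints. Existence of such minimizers follows from the direct method: the functional is coercive in $\|\dot z\|_g^2$ by uniform ellipticity, and $\mathcal{R}$ is a bounded lower-order term on the working region.

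\textbf{Step 2: LDP reading.} Rescale the noise $T\to\epsilon T$. The Freidlin--Wentzell principle on manifolds gives rate $\frac1{4T}\int\|\dot z+\nabla_{\!g}V\|_g^2\,dt$. This equals the leading-order part of $\mathcal{A}_{\mathrm{OM}}$, since $\mathcal{R}$ is $O(1)$ while the quadratic part is $O(1/\epsilon)$. Both readings therefore share the same leading Lagrangian.

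\textbf{Step 3: Euler--Lagrange equation.} Expand the leading Lagrangian:
\[
\mathcal{L}=\tfrac{1}{4T}\big(g_{ij}\dot z^i\dot z^j+2\dot z^i\partial_iV+g^{ij}\partial_iV\partial_jV\big).
\]
The cross term $\dot z^i\partial_iV=\tfrac{d}{dt}V$ is a total derivative and drops out at fixed endpoints. What remains is a natural Lagrangian, kinetic $T_S$ minus the effective potential $-\tfrac12|\nabla_{\!g}V|_g^2$. Its Euler--Lagrange equation is
\[
\ddot z^k+\Gamma^k_{ij}\dot z^i\dot z^j=\tfrac12\nabla^k_{g}\,|\nabla_{\!g}V|_g^2 ,
\]
which is second order, with geodesic part given by the Levi-Civita connection of $g$. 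Including $\mathcal{R}$ adds a further force term $-T\nabla^k_g(\operatorname{div}_g b-\tfrac T6\mathrm{Scal}_g)$ scaled appropriately. This term is the ``dissipatively drifted'' correction at finite $T$.

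I expect the main obstacle to be Step 1. The obstacle is matching the constant and sign conventions of $\mathcal{R}$ with \cite{takahashi1981}, and justifying the result on the quotient $M_S$ rather than on $\mathbb{R}^n$. I would handle the quotient by working in local charts on the working region, where $g$ is uniformly elliptic, and citing the theorem as stated there. I would also state the Euler--Lagrange computation modulo the total-derivative term.
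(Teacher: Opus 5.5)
Your Steps 1--2 follow the paper's proof. Both apply the Takahashi--Watanabe Onsager--Machlup theorem with $b=-\nabla_{\!g}V$ and read the quadratic part as the Freidlin--Wentzell rate. In Step 3 you and the paper disagree, and your computation is the correct one. The paper writes
\[
\frac{D\dot z}{dt}=\nabla_{\!g}\big(\tfrac12\|\nabla_{\!g}V\|_g^2\big)-(\nabla^2_{\!g}V)\dot z+O(T).
\]
But the cross term $\langle\dot z,\nabla_{\!g}V\rangle_g=\tfrac{d}{dt}V(z(t))$ is a null Lagrangian. Up to the common factor $1/(2T)$, it contributes $\partial_k\partial_iV\,\dot z^k$ to $\tfrac{d}{dt}\,\partial\mathcal{L}/\partial\dot z^i$ and $\dot z^k\partial_i\partial_kV$ to $\partial\mathcal{L}/\partial z^i$, and these cancel because the Hessian is symmetric. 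So there is no velocity-coupled term, and the leading equation is exactly your $\ddot z^k+\Gamma^k_{ij}\dot z^i\dot z^j=\tfrac12\nabla^k_g\|\nabla_{\!g}V\|_g^2$.

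The error is your last sentence of Step 3, where you try to attribute the statement's ``dissipatively drifted'' to $\mathcal{R}$. The integrand of $\mathcal{R}$ depends on position only, so it adds a gradient force, and the full Lagrangian is autonomous. Multiply the action by $2T$ and use $\operatorname{div}_g b=-\Delta_gV$. The extremals then satisfy
\[
\frac{D\dot z}{dt}=\nabla_{\!g}\Big(\tfrac12\|\nabla_{\!g}V\|_g^2-T\Delta_gV-\tfrac{T^2}{3}\mathrm{Scal}_g\Big),
\]
and they conserve the Jacobi integral $\tfrac12\|\dot z\|_g^2-\tfrac12\|\nabla_{\!g}V\|_g^2+T\Delta_gV+\tfrac{T^2}{3}\mathrm{Scal}_g$. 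Nothing in this equation is dissipative at any order in $T$.

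Two details of your correction are also off. First, the sign should be $+T\nabla_{\!g}\operatorname{div}_g b$, not $-$. Second, the curvature constant: Takahashi--Watanabe give $\tfrac12\|\dot\varphi-b\|^2+\tfrac12\operatorname{div}b-\tfrac1{12}\mathrm{Scal}$ for generator $\tfrac12\Delta+b$. The time change to generator $T\Delta_g+b$ turns this into $\mathcal{R}=\int(\tfrac12\operatorname{div}_g b-\tfrac T6\mathrm{Scal}_g)\,dt$, not half of that.

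Your own computation therefore shows the trajectory equation is second order and conservative: a geodesic equation driven by a gradient force, plus $O(T)$ gradient corrections. The ``dissipatively drifted'' qualifier in the statement is not supported, and neither is the paper's $-(\nabla^2_{\!g}V)\dot z$. You should say so explicitly rather than reassign the label to $\mathcal{R}$.

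A smaller point: the statement does not require existence of minimizers. Your direct-method argument would need $\Delta_gV$ and $\mathrm{Scal}_g$ bounded on the region, which Assumption~\ref{as:R3} does not provide. It also yields only $W^{1,2}$ minimizers, whereas the Onsager--Machlup theorem is stated for $C^2$ paths. Either drop the existence claim or add the needed regularity.
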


\begin{proof}
The drift $b=-\nabla_{\!g}V$ is $C^1$ with locally bounded derivatives (Assumption~\ref{as:R3}, $V\in C^2$) and $g$ is smooth (Assumption~\ref{as:R1}), so the hypotheses of the Onsager--Machlup theorem on a Riemannian manifold \cite{takahashi1981} hold: for a $C^2$ reference path $\phi$, the probability that the diffusion stays in a tube of radius $\varepsilon$ around $\phi$ is, as $\varepsilon\to0$, proportional to $\exp\!\big(-\mathcal{A}_{\mathrm{OM}}[\phi;b]\big)$ up to a $\phi$-independent factor, with $\mathcal{A}_{\mathrm{OM}}$ as in \eqref{eq:omfunctional} and the correction $\mathcal{R}$ fixed by the convention of \cite{takahashi1981}. Hence the most probable path minimizes $\mathcal{A}_{\mathrm{OM}}$. Its Euler--Lagrange equation from the leading term $\tfrac{1}{4T}\int\|\dot z+\nabla_{\!g}V\|_g^2\,dt$ is
\[
\frac{D\dot z}{dt}=\nabla_{\!g}\Big(\tfrac12\|\nabla_{\!g}V\|_g^2\Big)-\big(\nabla^2_{\!g}V\big)\dot z + O(T),
\]
a second-order equation that perturbs the geodesic equation by a potential-driven, velocity-coupled (dissipative) drift; the correction $\mathcal{R}$ contributes at $O(T)$. In the small-noise limit $T\to0$ the same minimizer is selected as the Freidlin--Wentzell rate minimizer of $\tfrac{1}{4T}\int\|\dot z+\nabla_{\!g}V\|_g^2\,dt$ \cite{freidlin2012}.
\end{proof}

\begin{lemma}[geodesics from the energy functional]\label{lem:geodesic}
Fixed-endpoint stationary points of $E[\gamma]$ in \eqref{eq:energyfunctional} satisfy the geodesic equation $\tfrac{D}{dt}\dot z=0$; along them the speed $\|\dot z\|_g$ is constant; and $L(\gamma)^2\le 2(t_1-t_0)E[\gamma]$ with equality if and only if the speed is constant. Consequently the minimizers of $E$ are exactly the constant-speed minimizing geodesics.
\end{lemma}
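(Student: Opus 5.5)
The argument has three steps: derive the Euler--Lagrange equation of $E$ by a first variation, extract constant speed from it, and compare length with energy by Cauchy--Schwarz. Throughout we work on $(M_S,g)$ with $g$ the positive-definite Riemannian metric supplied by Theorem~\ref{thm:T1}. We restrict to piecewise $C^1$ (or $H^1$) curves $\gamma:[t_0,t_1]\to M_S$ with fixed endpoints, so that $E[\gamma]$ and $L(\gamma)=\int_{t_0}^{t_1}\|\dot z\|_g\,dt$ are both finite.

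\emph{First variation.} Take a variation $z_\varepsilon(t)$ with fixed endpoints and variation field $\eta=\partial_\varepsilon z_\varepsilon|_{\varepsilon=0}$, so $\eta(t_0)=\eta(t_1)=0$. We use the Levi-Civita connection $\nabla$ of $g$: metric compatibility and torsion-freeness give $\tfrac{D}{d\varepsilon}\dot z=\tfrac{D}{dt}\eta$. Hence
\[
\frac{d}{d\varepsilon}E[z_\varepsilon]\Big|_{0}=\int_{t_0}^{t_1}g\big(\tfrac{D}{dt}\eta,\dot z\big)\,dt=-\int_{t_0}^{t_1}g\big(\eta,\tfrac{D}{dt}\dot z\big)\,dt ,
\]
where the second equality is integration by parts and the boundary term vanishes. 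Since $\eta$ is arbitrary, the fundamental lemma of the calculus of variations gives $\tfrac{D}{dt}\dot z=0$. In coordinates this is the semantic Euler--Lagrange equation \eqref{eq:el1} with $L_S=\tfrac12 g_{ij}\dot z^i\dot z^j$, namely $\ddot z^k+\Gamma^k_{ij}\dot z^i\dot z^j=0$. For merely $H^1$ stationary points, a du Bois-Reymond argument first upgrades regularity: the weak Euler--Lagrange equation makes $g_{ij}\dot z^j$ absolutely continuous, and bootstrapping gives smoothness.

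\emph{Constant speed.} We differentiate the squared speed, $\tfrac{d}{dt}g(\dot z,\dot z)=2g\big(\tfrac{D}{dt}\dot z,\dot z\big)=0$, so $\|\dot z\|_g$ is constant. Equivalently, $L_S$ has no explicit $t$-dependence, so the Beltrami/energy integral $\dot z^i\partial L_S/\partial\dot z^i-L_S=L_S$ is conserved.

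\emph{Length--energy inequality.} Cauchy--Schwarz in $L^2[t_0,t_1]$ applied to $1\cdot\|\dot z\|_g$ gives
\[
L(\gamma)^2\le (t_1-t_0)\int_{t_0}^{t_1}\|\dot z\|_g^2\,dt=2(t_1-t_0)E[\gamma].
\]
Equality holds if and only if $\|\dot z\|_g$ is a.e.\ constant.

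\emph{Minimizers.} We compare both directions. Let $\gamma$ minimize $E$. If $\gamma$ is a minimizer of $E$, it is stationary, so it is a constant-speed geodesic. Suppose some curve $\beta$ has $L(\beta)<L(\gamma)$. Since $L$ is parametrization-invariant, we reparametrize $\beta$ proportionally to arc length on $[t_0,t_1]$, which makes it constant-speed. Then
\[
2(t_1-t_0)E[\beta]=L(\beta)^2<L(\gamma)^2=2(t_1-t_0)E[\gamma],
\]
contradicting minimality of $\gamma$. Hence $\gamma$ minimizes length. Conversely, let $\gamma$ be a constant-speed length minimizer. For any $\beta$,
\[
2(t_1-t_0)E[\gamma]=L(\gamma)^2\le L(\beta)^2\le 2(t_1-t_0)E[\beta],
\]
so $\gamma$ minimizes $E$.

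The main obstacle is the reparametrization step inside the minimizer characterization. The reparametrization by arc length requires $\beta$ to have no intervals where it is stationary, or else an approximation by such curves, and it must respect the regularity class in which $E$ is minimized. A second delicate point is the regularity upgrade for stationary points that are only weakly differentiable. I would handle both by working in $H^1$ and citing the standard treatment in \cite{lee2018}. Existence of minimizers is not claimed here and is not needed.
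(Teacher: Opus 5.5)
Your proposal is correct and follows the same route as the paper. You derive the Euler--Lagrange equation of $E$ (coordinate-free via the Levi-Civita connection, where the paper computes in coordinates with Christoffel symbols), get constant speed from $\tfrac{d}{dt}g(\dot z,\dot z)=0$ or equivalently the Beltrami integral the paper uses, and apply Cauchy--Schwarz for $L(\gamma)^2\le 2(t_1-t_0)E[\gamma]$; your explicit two-direction minimizer argument, with the arc-length reparametrization of a competitor $\beta$ on $[t_0,t_1]$, fills in the step the paper's proof only asserts in one sentence.
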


\begin{proof}
With $\mathcal{L}=\tfrac12 g_{jk}\dot z^j\dot z^k$, $\partial\mathcal{L}/\partial z^i=\tfrac12\partial_i g_{jk}\dot z^j\dot z^k$ and $\partial\mathcal{L}/\partial\dot z^i=g_{ij}\dot z^j$, so the Euler--Lagrange equation is
\[
\frac{d}{dt}\big(g_{ij}\dot z^j\big)-\tfrac12\partial_i g_{jk}\dot z^j\dot z^k
=g_{ij}\ddot z^j+\big(\partial_k g_{ij}-\tfrac12\partial_i g_{jk}\big)\dot z^j\dot z^k
=g_{i\ell}\big(\ddot z^\ell+\Gamma^\ell_{jk}\dot z^j\dot z^k\big)=0,
\]
i.e.\ $\tfrac{D}{dt}\dot z=0$, using $\Gamma^\ell_{jk}=\tfrac12 g^{\ell m}(\partial_j g_{mk}+\partial_k g_{mj}-\partial_m g_{jk})$. Since $\mathcal{L}$ has no explicit $t$-dependence, the energy integral $\dot z^i\partial\mathcal{L}/\partial\dot z^i-\mathcal{L}=\tfrac12 g_{ij}\dot z^i\dot z^j$ is conserved, so $\|\dot z\|_g$ is constant along extremals. Finally, by Cauchy--Schwarz,
\[
L(\gamma)=\int_{t_0}^{t_1}\|\dot z\|_g\,dt\le\Big(\int_{t_0}^{t_1}1\,dt\Big)^{1/2}\Big(\int_{t_0}^{t_1}\|\dot z\|_g^2\,dt\Big)^{1/2}=\sqrt{2(t_1-t_0)E[\gamma]},
\]
with equality if and only if $\|\dot z\|_g$ is constant. Thus a constant-speed energy-minimizer also minimizes length, and conversely a minimizing geodesic parameterized at constant speed minimizes $E$; the two variational problems share the same solution curves.
\end{proof}

\begin{theorem}[T6: small-noise concentration in regions of small potential gradient]\label{thm:T6}
Let $U\subset M_S$ satisfy $\|\nabla_{\!g}V\|_g\le\varepsilon$ on $U$. For fixed endpoints joined by a unique minimizing geodesic in $U$, the endpoint-conditioned law approaches concentration near that geodesic as $T\to0$ and $\varepsilon\to0$, subject to the bounded-path and correction-term conditions stated below.
\end{theorem}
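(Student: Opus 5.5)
The plan is to combine the path-variational characterization of Theorem~\ref{thm:T5} with Lemma~\ref{lem:geodesic}. I would treat the endpoint-conditioned law as a Gibbs measure on path space whose exponent is $\mathcal{A}_{\mathrm{OM}}[z;-\nabla_{\!g}V]$, and show that on $U$ this functional is a small perturbation of $\tfrac{1}{2T}E[\gamma]$. Expanding the leading term gives
\[
\tfrac{1}{4T}\int_{t_0}^{t_1}\|\dot z+\nabla_{\!g}V\|_g^2\,dt=\tfrac{1}{2T}E[z]+\tfrac{1}{2T}\big(V(z(t_1))-V(z(t_0))\big)+\tfrac{1}{4T}\int_{t_0}^{t_1}\|\nabla_{\!g}V\|_g^2\,dt .
\]
The cross term integrates exactly to the endpoint difference of $V$, because $\langle\dot z,\nabla_{\!g}V\rangle_g=\tfrac{d}{dt}V(z(t))$. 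At fixed endpoints this difference is a path-independent constant and drops out of the conditioned law. The last term is bounded by $(t_1-t_0)\varepsilon^2/(4T)$ for every path that stays in $U$. So, up to a constant, the conditioned rate functional equals $\tfrac{1}{2T}E[z]+O\big((t_1-t_0)\varepsilon^2/T\big)+\mathcal{R}[z]$.

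The key steps, in order, are as follows.
\begin{enumerate}
\item Impose the \emph{bounded-path condition}: restrict attention to paths confined to $U$ with uniformly bounded energy. I would justify the restriction by a large-deviation estimate. Paths leaving $U$, or having energy above a threshold, carry rate at least some $c>0$ above the infimum, and so receive probability $O(e^{-c/T})$.
\item Impose the \emph{correction-term condition}: assume $\mathcal{R}[z]$ is bounded uniformly on this path class. This holds, for example, if $\Delta_g V$ and the scalar curvature are bounded on $U$, since in the convention of \cite{takahashi1981} those are the quantities it integrates. Then $\mathcal{R}$ contributes only an $O(1)$ factor, which is negligible against $1/T$.
\item Invoke Lemma~\ref{lem:geodesic}. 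The minimizers of $E$ at fixed endpoints and horizon are exactly the constant-speed minimizing geodesics, and by hypothesis there is a unique one, $\gamma^*$, in $U$.
\item Apply the Laplace principle (the Freidlin--Wentzell form, as in Theorem~\ref{thm:T5}) to the perturbed rate $\tfrac{1}{2T}\big(E+O(\varepsilon^2)\big)$. For any $\delta$-tube $B_\delta(\gamma^*)$, the conditioned mass of its complement is at most $\exp\!\big(-(\eta(\delta)-C\varepsilon^2)/(2T)\big)$, where $\eta(\delta)=\inf_{\gamma\notin B_\delta}E[\gamma]-E[\gamma^*]$.
\end{enumerate}

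The main obstacle is showing that the gap $\eta(\delta)$ is strictly positive, which requires lower semicontinuity and compactness of the sublevel sets of $E$ on $U$. I would get these from the uniform ellipticity in Assumption~\ref{as:R3}: it makes $E$ coercive in $H^1$ and lower semicontinuous, and Arzel\`a--Ascoli then gives compactness in the uniform topology. Uniqueness of the minimizer then gives $\eta(\delta)>0$. Concentration follows once $\varepsilon^2<\eta(\delta)/C$ and $T\to0$. This is why the statement requires the joint limit $T\to0$, $\varepsilon\to0$ and only asserts approach to concentration, rather than a quantitative rate.
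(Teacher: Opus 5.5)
Your proposal is correct and has the same skeleton as the paper's proof. Both expand the leading Onsager--Machlup term, control the $\nabla_{\!g}V$ contributions on $U$, reduce the scaled rate to $\tfrac12E[\gamma]$, invoke Lemma~\ref{lem:geodesic} for the unique minimizer, and conclude with the endpoint-conditioned large-deviation principle.

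The substantive difference is the cross term. The paper bounds $\tfrac12\int\langle\dot z,\nabla_{\!g}V\rangle_g\,dt$ by $\tfrac12\varepsilon L(\gamma)$, which is why it needs paths of uniformly bounded length. You instead observe that $\langle\dot z,\nabla_{\!g}V\rangle_g=\tfrac{d}{dt}V(z(t))$, so at fixed endpoints the term is a path-independent constant that cancels in the conditioned law. This gives you two advantages:
\begin{itemize}
\item The perturbation is $O(\varepsilon^2)$ rather than the paper's $O(\varepsilon)$.
\item Since the remaining term $\tfrac14\int\|\nabla_{\!g}V\|_g^2\,dt$ is nonnegative, the leading term is bounded below by $\tfrac{1}{2T}E[z]$ plus an endpoint constant for \emph{every} path, including paths that leave $U$.
\end{itemize}

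That global lower bound is the cleanest way to make your step~1 rigorous. It works provided $\gamma^*$ is the unique minimizer of $E$ among all fixed-endpoint paths, not merely among those in $U$. If it is unique only within $U$, confinement cannot be derived and must be assumed, as the statement allows.

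You also make explicit that the gap $\eta(\delta)$ is positive, via coercivity, lower semicontinuity and Arzel\`a--Ascoli. The paper compresses this step into ``stability of a unique rate minimizer.'' The paper's version is shorter, but only because it leaves these points to the stated bounded-path hypothesis.
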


\begin{proof}
Expanding the leading term of \eqref{eq:omfunctional} with $b=-\nabla_{\!g}V$ gives
\[
T\mathcal{A}_{\mathrm{OM}}
=\tfrac12 E[\gamma]
+\tfrac12\int_{t_0}^{t_1}\langle\dot z,\nabla_{\!g}V\rangle_g\,dt
+\tfrac14\int_{t_0}^{t_1}\|\nabla_{\!g}V\|_g^2\,dt
+T\mathcal{R}[z].
\]
For paths of uniformly bounded length in $U$, the absolute values of the second and third terms are bounded by $\tfrac12\varepsilon L(\gamma)$ and $\tfrac14\varepsilon^2(t_1-t_0)$, respectively. If $\mathcal{R}$ remains bounded on the path class, the scaled rate functional therefore converges to $\tfrac12E[\gamma]$ as $(T,\varepsilon)\to(0,0)$. By Lemma~\ref{lem:geodesic}, its unique fixed-endpoint minimizer is the assumed unique constant-speed minimizing geodesic. The endpoint-conditioned large-deviation principle of Theorem~\ref{thm:T5} \cite{freidlin2012,demboZeitouni2010}, together with stability of a unique rate minimizer under the stated bounds, then implies concentration in every neighborhood of that geodesic in the joint small-noise and small-gradient limit.
\end{proof}

\begin{corollary}[C6.1: conditional agreement of A3 and A4]\label{cor:C61}
Under the stated local-constancy and small-noise conditions, the planning minimizer of Axiom~\ref{ax:A4} and the most probable path of Axiom~\ref{ax:A3} select the same limiting geodesic.
\end{corollary}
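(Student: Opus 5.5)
The plan is to reduce both sides to the same limiting variational problem, namely fixed-endpoint minimization of the energy functional $E[\gamma]$ of Definition~\ref{def:geodesic}. Lemma~\ref{lem:geodesic} then identifies its unique minimizer as the constant-speed minimizing geodesic. I would first fix what ``local-constancy'' is to mean. On the working region $U$ containing the endpoints, and for the utility channels entering \eqref{eq:A4}:
\begin{itemize}
\item $\|\nabla_{\!g}V\|_g\le\varepsilon$;
\item each pointwise utility $u_i$ is constant up to $O(\varepsilon)$, so that $u_i=\bar u_i+O(\varepsilon)$;
\item the constraints $\bar u_i\ge c_i$ hold with strict slack.
\end{itemize}
I would also take the planning integrand to be $E_S=T_S+\hat E$, that is, kinetic energy plus normalized configuration energy. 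This is consistent with Definition~\ref{def:energy}, where kinetic energy enters through the Hamiltonian.

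\textbf{Planning side (A4).} With $u_i$ locally near-constant and strict slack, the integral constraints in \eqref{eq:A4} are satisfied by every path that stays in $U$. The constraints are therefore inactive, and their Lagrange multipliers vanish for $\varepsilon$ small. The objective splits as
\[
\int_{t_0}^{t_1}E_S\,dt=E[\gamma]+\int_{t_0}^{t_1}\hat E(z)\,dt .
\]
Because $\hat E=V/V_{\mathrm{ref}}$ has gradient $O(\varepsilon)$, the second term differs from a path-independent constant by at most $O(\varepsilon\,\mathrm{diam}(U)(t_1-t_0))$ on bounded-length paths. So the planning functional converges, uniformly on the bounded-length class, to $E[\gamma]$ plus a constant. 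I would then invoke a $\Gamma$-convergence / stability-of-unique-minimizer argument: uniform convergence on a compact path class together with a unique limit minimizer. This shows that planning minimizers converge to the unique minimizing geodesic assumed in Theorem~\ref{thm:T6}.

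\textbf{Stochastic side (A3).} Here Theorem~\ref{thm:T6} applies directly. Under the same bounds and the condition that $\mathcal{R}$ is bounded, the endpoint-conditioned law concentrates on the same geodesic as $(T,\varepsilon)\to(0,0)$, and by Theorem~\ref{thm:T5} the most probable path is the OM minimizer. Since both limits are the unique minimizer of $E$ with the same endpoints and the same horizon, they coincide. Lemma~\ref{lem:geodesic} guarantees that the time parametrizations agree too, both being constant speed.

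\textbf{Main obstacle.} The hardest part is making the ``stability of the unique minimizer'' step rigorous on both sides. This needs compactness of the bounded-length, fixed-endpoint path class, which I would obtain via Arzel\`a--Ascoli using the uniform ellipticity of Assumption~\ref{as:R3}. It also needs the perturbations ($O(\varepsilon)$ cross terms, $T\mathcal{R}$, and the potential term) to vanish uniformly there. A second delicate point is showing that the utility constraints in \eqref{eq:A4} are genuinely inactive. If they were active, the multipliers would add a term $\sum_i\mu_i u_i$ to the Lagrangian. Under local constancy this term is itself $O(\varepsilon)$ in gradient, so I would fall back on absorbing it into the same uniform perturbation bound.
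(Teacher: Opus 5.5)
Your proposal is correct and follows the paper's route. On $U$ the objective of \eqref{eq:A4} reduces to $E[\gamma]$ plus a path-independent constant, Lemma~\ref{lem:geodesic} identifies its minimizer as the constant-speed geodesic, and Theorem~\ref{thm:T6} gives the same geodesic as the small-noise most probable path of Axiom~\ref{ax:A3}. The paper assumes exact constancy of $u_i$ and $V$, so its reduction is exact and needs no $\Gamma$-convergence or stability step.

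One caution concerns your fallback for active constraints, which would not work as stated. With only near-constant $u_i$ the constraint gradient degenerates, so the multipliers need not stay bounded and $\sum_i\mu_i u_i$ cannot be treated as an $O(\varepsilon)$ perturbation. What actually removes the constraints is your strict-slack assumption, or the paper's exact constancy.
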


\begin{proof}
On $U$ with $u_i$ and $V$ locally constant, the objective of \eqref{eq:A4} reduces (up to constants) to $\int T_S\,dt=E[\gamma]$, whose minimizer is the geodesic (Lemma~\ref{lem:geodesic}); by Theorem~\ref{thm:T6} this is also the $T\to0$ most probable path of Axiom~\ref{ax:A3}. Hence the two minimizers coincide.
\end{proof}

\subsubsection{Planning and learning}

These results concern Hypothesis~I and the planning component of Hypothesis~II.

\begin{theorem}[T7: first-order conditions and existence for planning]\label{thm:T7}
Under Axiom~\ref{ax:A4}, Definitions~\ref{def:energy},~\ref{def:utility} and Assumption~\ref{as:R3}, and a normality (constraint-qualification) condition, there exist multipliers $\nu_i\ge0$ such that a minimizer of \eqref{eq:A4} satisfies
\begin{equation}
\delta\!\int_{t_0}^{t_1}\Big(\sum_i\nu_i u_i-E_S\Big)dt=0,\qquad \nu_i\Big(\int u_i\,dt-c_i(t_1-t_0)\Big)=0,
\label{eq:T7kkt}
\end{equation}
equivalently the semantic action functional
\begin{equation}
J[z]=\int_{t_0}^{t_1}\big(U_S(z)-\lambda\,E_S(z,\dot z)\big)\,dt
\label{eq:action2}
\end{equation}
is obtained as a necessary condition with $\lambda$ a multiplier; and a minimizer exists.
\end{theorem}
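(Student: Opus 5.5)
The plan has two parts: first the first-order conditions, obtained from an infinite-dimensional Lagrange multiplier (Karush--Kuhn--Tucker) theorem, then existence, obtained by the direct method of the calculus of variations.

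\textbf{Setting up the problem.} I would pose \eqref{eq:A4} on the admissible class
\[
\mathcal{C}=\{z\in H^1([t_0,t_1];M_S):\ z(t_0)=a,\ z(t_1)=b\},
\]
with the endpoints $a,b$ fixed. On this class define the objective $\mathcal{E}[z]=\int E_S\,dt$ and the constraint maps $G_i[z]=c_i(t_1-t_0)-\int u_i\,dt\le 0$. The integrand is taken as $E_S=T_S+V$, i.e.\ the Lagrangian read off from $H_S$ in Definition~\ref{def:energy}. Under Assumption~\ref{as:R3}, $g$ is uniformly elliptic and bounded and $V\in C^2$. I will additionally assume the $u_i$ are $C^1$ on the working region. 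With these, $\mathcal{E}$ and the $G_i$ are Fréchet differentiable on $\mathcal{C}$.

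\textbf{First-order conditions.} The normality condition in the statement says the active constraint derivatives $\{DG_i[z^*]\}$ admit a direction $h\in H^1_0$ with $DG_i[z^*]h<0$ for every active $i$ (a Mangasarian--Fromovitz condition). Under it, the generalized Lagrange multiplier rule gives multipliers $\nu_i\ge0$ with $D\mathcal{E}[z^*]+\sum_i\nu_i DG_i[z^*]=0$ on $H^1_0$, together with complementary slackness. This is exactly \eqref{eq:T7kkt}, since the constants $c_i(t_1-t_0)$ drop out under variation. I would cite the standard result (e.g.\ Luenberger, or Ioffe--Tikhomirov) rather than reprove it.

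\textbf{Recovering the action functional.} If some $\nu_i>0$, divide by $\Lambda:=\sum_i\nu_i$ and set $\lambda=1/\Lambda$ and $U_S:=\sum_i(\nu_i/\Lambda)u_i$. The stationarity condition then becomes $\delta J=0$ with $J$ as in \eqref{eq:action2}, and its Euler--Lagrange form is \eqref{eq:el1}, by the same computation as in Lemma~\ref{lem:geodesic} with a potential term added. If all $\nu_i=0$ (no active constraints), the condition reduces to the unconstrained Euler--Lagrange equation of $E_S$. This is the limit $\lambda\to\infty$, or equivalently $U_S$ enters with zero weight, so I will note it as a degenerate case.

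\textbf{Existence.} Apply the direct method. $V$ is bounded below, so $\mathcal{E}$ is bounded below on $\mathcal{C}$. Uniform ellipticity gives coercivity: $\mathcal{E}[z]\ge \tfrac{m}{2}\|\dot z\|_{L^2}^2-C$. A minimizing sequence is therefore bounded in $H^1$, and a subsequence converges weakly in $H^1$ and uniformly by Arzelà--Ascoli. The kinetic term $\tfrac12 g_{ij}(z)\dot z^i\dot z^j$ is convex in $\dot z$, with $g$ continuous in $z$, so it is weakly lower semicontinuous (Tonelli). The term $\int V\,dt$ converges by uniform convergence.

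\textbf{Main obstacle.} The hard step is showing the feasible set is nonempty and weakly closed. The constraints $\int u_i\,dt\ge c_i(t_1-t_0)$ involve only $z$, not $\dot z$, so uniform convergence together with continuity and boundedness of the $u_i$ on the working region makes them closed. I will therefore assume feasibility, i.e.\ the thresholds are attainable, and I expect to have to state it explicitly, together with boundedness of the $u_i$. The reason is that $u_{comm}$ involves $d_g$ and $u_{ctrl}$ involves a maximum over policies, so continuity of these channels in $z$ is not automatic. I would add it as a standing hypothesis.
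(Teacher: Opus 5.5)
Your proposal is correct and follows the same skeleton as the paper's proof: a multiplier rule for the isoperimetric constraints, then Tonelli's direct method. Coercivity comes from the uniform ellipticity of $g$ and the lower bound on $V$, and lower semicontinuity from convexity of $E_S$ in $\dot z$. The differences are in the details, and they mostly favour your version.

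The paper reads ``normality'' as linear independence of the active constraint gradients. That implies your Mangasarian--Fromovitz condition: linear independence makes $h\mapsto (DG_i[z^*]h)_{i\ \mathrm{active}}$ surjective, so some $h\in H^1_0$ has $DG_i[z^*]h=-1$ for all active $i$. You therefore reach the same conclusion under a weaker hypothesis, at the price of possibly non-unique multipliers. The Banach-space KKT rule you cite also gives $\nu_i\ge0$ and complementary slackness directly, whereas the classical isoperimetric multiplier theorem the paper invokes is stated for equality constraints.

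Your normalization by $\Lambda=\sum_i\nu_i$, giving $\lambda=1/\Lambda$ and $U_S=\sum_i(\nu_i/\Lambda)u_i$, makes explicit that $\lambda$ is the reciprocal of the total active multiplier. The paper's ``dividing through'' leaves this implicit, and your version also isolates the degenerate case with no active constraint.

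Most substantively, the paper's existence step applies the direct method to $\int E_S\,dt$ over all fixed-endpoint absolutely continuous curves. It never checks that the feasible set of \eqref{eq:A4} is nonempty and closed, so as written it yields a minimizer of the unconstrained energy, not of the constrained problem. Your argument closes that step: the constraints involve $z$ but not $\dot z$, so they survive uniform limits once the $u_i$ are continuous on the working region, and you add an explicit feasibility assumption. These hypotheses, and the $C^1$ regularity of the $u_i$ that any multiplier rule presupposes, are absent from the theorem's statement. They are genuinely needed: if the thresholds $c_i$ are unattainable, no minimizer exists.
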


\begin{proof}
The constraints $\int u_i\,dt\ge c_i(t_1-t_0)$ are isoperimetric. Under the normality condition (the active constraint gradients are linearly independent at the minimizer), the Lagrange multiplier theorem for isoperimetric problems \cite{dacorogna2008,lanczos1986} yields multipliers $\nu_i\ge0$ with the stationarity and complementary-slackness conditions \eqref{eq:T7kkt}; collecting $\sum_i\nu_i u_i=:U_S$ and dividing through gives \eqref{eq:action2} with $\lambda$ the (single active) multiplier, so $\lambda$ and the utility weights are multipliers rather than free parameters. Existence: the Lagrangian $E_S=\tfrac12 g_{ij}\dot z^i\dot z^j+V$ is, by Assumption~\ref{as:R3}, continuous in $z$, and convex and superlinear (coercive) in $\dot z$ because $g$ is uniformly positive definite; the integral functional is therefore sequentially weakly lower semicontinuous and coercive on the class of absolutely continuous curves with fixed endpoints, so by Tonelli's direct method \cite{dacorogna2008} a minimizer exists.
\end{proof}

\begin{corollary}[C7.1: interface degeneracy]\label{cor:C71}
If $\mathcal{A}=\varnothing$ then $u_{ctrl}\equiv0$ and $\nu_{ctrl}=0$, which follows from complementary slackness.
\end{corollary}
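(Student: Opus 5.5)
The plan is to read the corollary directly off the first-order conditions \eqref{eq:T7kkt} of Theorem~\ref{thm:T7}. First I fix what $u_{ctrl}$ is when the action interface is empty. By Definition~\ref{def:utility}, $u_{ctrl}(z)=\max_\pi I(A_{1:H};Z_H\mid Z_0=z)$. If $\mathcal{A}=\varnothing$, there is no action variable, or equivalently $A_{1:H}$ is a constant (degenerate) random variable. The mutual information of a constant with any variable is zero, so $u_{ctrl}\equiv 0$ pointwise. The definition already records this convention ("$0$ if $\mathcal{A}=\varnothing$"), so this step amounts to checking consistency.

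Next I substitute into the control constraint of Axiom~\ref{ax:A4}. It becomes $\int_{t_0}^{t_1}u_{ctrl}\,dt=0\ge c_{ctrl}(t_1-t_0)$. Here a case split is needed, because complementary slackness alone does not force a multiplier to vanish.

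\emph{Case 1: $c_{ctrl}<0$, or the constraint is otherwise dropped.} The constraint is inactive (strictly satisfied). Then $\nu_{ctrl}\big(\int u_{ctrl}\,dt-c_{ctrl}(t_1-t_0)\big)=0$ with a strictly positive bracket gives $\nu_{ctrl}=0$ immediately.

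\emph{Case 2: $c_{ctrl}=0$.} The constraint holds with equality for every trajectory, so it is formally active. This is the main obstacle. Its constraint functional $z\mapsto\int u_{ctrl}\,dt$ is identically zero, so its gradient vanishes. The normality condition assumed in Theorem~\ref{thm:T7} (linear independence of active constraint gradients) then fails if this constraint is counted among the active ones.

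The way through is to note that a constraint that is trivially satisfied can be removed from the problem without changing the feasible set or the minimizer. One then applies Theorem~\ref{thm:T7} to the reduced problem, where normality can hold, and sets $\nu_{ctrl}:=0$. The stationarity equation is unchanged, since the term $\nu_{ctrl}u_{ctrl}$ vanishes identically whatever value $\nu_{ctrl}$ takes. Complementary slackness holds trivially with this choice.

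\emph{Case 3: $c_{ctrl}>0$.} The problem is infeasible. This is excluded because Theorem~\ref{thm:T7} presupposes that a minimizer exists.

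Finally, I conclude that $U_S=\sum_i\nu_iu_i$ in \eqref{eq:action2} carries no control channel. The statement's phrase "follows from complementary slackness" is therefore exactly right in the inactive case, and holds up to the canonical choice of multiplier in the degenerate active case.
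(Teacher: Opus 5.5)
Your proposal is correct and uses the same two ingredients as the paper. Definition~\ref{def:utility} gives $u_{ctrl}\equiv0$, and the multiplier is then read off the complementary-slackness condition in \eqref{eq:T7kkt}. Where you differ is in how you handle the budget $c_{ctrl}$, and your version is the more accurate one.

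The paper fixes $c_{ctrl}=0$, calls the control constraint inactive, and says \eqref{eq:T7kkt} forces $\nu_{ctrl}=0$. With $c_{ctrl}=0$, however, the constraint reads $0\ge0$ and holds with equality on every path. Complementary slackness then becomes $\nu_{ctrl}\cdot 0=0$, which places no restriction on $\nu_{ctrl}$. As you note, the constraint functional is also identically zero, so counting it as active would violate the normality hypothesis of Theorem~\ref{thm:T7}. In that case $\nu_{ctrl}=0$ is not a consequence of complementary slackness. It is a normalization obtained by dropping the vacuous constraint; stationarity is insensitive to $\nu_{ctrl}$ because $u_{ctrl}\equiv0$. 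This is exactly your Case~2.

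The paper's one-line argument is literally valid only in your Case~1 ($c_{ctrl}<0$), where the constraint is strictly slack. The paper's route is shorter and matches the corollary's wording. Your case split shows when ``follows from complementary slackness'' is a genuine derivation and when it is only a canonical choice of multiplier. It also checks that Theorem~\ref{thm:T7}'s hypotheses still hold for the reduced problem, and rules out $c_{ctrl}>0$ as infeasible.
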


\begin{proof}
By Definition~\ref{def:utility}, $\mathcal{A}=\varnothing$ gives $u_{ctrl}\equiv0$, so $\int u_{ctrl}\,dt=0$. With the admissible budget $c_{ctrl}=0$ the control constraint is inactive, and \eqref{eq:T7kkt} forces $\nu_{ctrl}=0$.
\end{proof}

\begin{corollary}[C7.2: stationary condition for deliberation depth]\label{cor:C72}
An interior stationary depth $k$ satisfies $\partial_k\langle u_{reason}\rangle=\kappa'\,\partial_k C$, equating marginal reasoning utility with marginal computational cost.
\end{corollary}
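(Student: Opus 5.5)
The plan is to treat the deliberation depth $k$ as the learning-scale decision variable in Axiom~\ref{ax:A1}. Only two pieces of the objective depend on $k$: the complexity term $\kappa\,C(\theta)$, where $C$ depends on $k$ because each extra application of $\Psi$ costs computation (Definition~\ref{def:complexity}), and the reasoning utility $\langle u_{reason}\rangle$. By Definition~\ref{def:utility}, the latter depends on $k$ through $\Psi^k$. The free-energy term $F_S[q_\theta]$ is defined at fixed inference-scale parameters $(\phi,\psi)$ and does not involve the iteration count. I will state this explicitly as the standing reading that $\partial_k F_S=0$. If it fails, $\partial_k F_S$ simply gets absorbed into the effective marginal cost.

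First, I relax $k$ to a continuous variable. I assume that $k\mapsto\langle u_{reason}\rangle(k)$ and $k\mapsto C(k)$ are differentiable near the optimum, and I note that this smoothness has to be added as a hypothesis. Next I form the Lagrangian of \eqref{eq:A1} restricted to $k$:
\[
\mathcal{L}(k,\nu)=F_S+\kappa\,C(k)-\sum_i\nu_i\big(U_i(k)-c_i\big).
\]
Only $U_{reason}=\langle u_{reason}\rangle$ depends on $k$. This mirrors Theorem~\ref{thm:T7}: under a constraint qualification at the (local) solution characterized in Axiom~\ref{ax:A1}, there is a multiplier $\nu_{reason}\ge0$ satisfying the KKT conditions.

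The interior stationarity condition $\partial_k\mathcal{L}=0$ gives
\[
\kappa\,\partial_k C=\nu_{reason}\,\partial_k\langle u_{reason}\rangle.
\]
There are two cases for the reasoning constraint.
\begin{itemize}
\item If it is active with $\nu_{reason}>0$, divide through and set $\kappa':=\kappa/\nu_{reason}$. This yields the claimed identity.
\item If it is inactive, complementary slackness (as in Corollary~\ref{cor:C71}) gives $\nu_{reason}=0$. Stationarity then forces $\partial_k C=0$, which is the degenerate limit $\kappa'\to\infty$, or equivalently no interior depth. I will remark that the statement is meant for the active case, or read $\kappa'$ as the ratio of the multiplier on cost to the multiplier on utility, so that it also covers a penalized formulation.
\end{itemize}

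The main obstacle is that $k$ is integer-valued, so "stationary" has to be justified. My first attempt will be a continuous relaxation, for example fractional or adaptive-computation depth as in \cite{graves2016}. Failing that, I will state the result as a discrete first-order condition with forward differences, $\Delta_k\langle u_{reason}\rangle\approx\kappa'\Delta_k C$, which becomes the stated derivative identity in the continuum limit.
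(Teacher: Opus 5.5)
Your proposal is correct and follows essentially the paper's route. The paper simply takes the depth-dependent part of Axiom~\ref{ax:A1} to be the penalized objective $\langle u_{reason}\rangle(k)-\kappa' C(k)$ and sets its $k$-derivative to zero; your KKT derivation additionally makes explicit, under your stated assumptions $\partial_k F_S=0$ and a continuous relaxation of $k$, that $\kappa'=\kappa/\nu_{reason}$ and that the identity is meaningful only when the reasoning constraint is active — points the paper leaves implicit.
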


\begin{proof}
On the learning scale, $k$ is selected by optimizing reasoning utility subject to computational cost: maximize $\langle u_{reason}\rangle(k)-\kappa'C(k)$ (the depth-dependent part of \eqref{eq:A1}). Stationarity in $k$ gives $\partial_k\langle u_{reason}\rangle-\kappa'\partial_k C=0$.
\end{proof}

\begin{theorem}[T8: learning monotonicity]\label{thm:T8}
Under Axiom~\ref{ax:A1} and Definitions~\ref{def:free}--\ref{def:complexity}, the projected descent flow makes $\Phi(\theta):=F_S[q_\theta]+\kappa C(\theta)$ non-increasing along learning time $s$. If, in addition, the entropy and complexity terms remain fixed on the active constraint set, then a decrease in $\Phi$ corresponds to a decrease in $\langle V\rangle_{q_s}$.
\end{theorem}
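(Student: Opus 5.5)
The plan is to treat the learning dynamics of Axiom~\ref{ax:A1} as a projected gradient flow, $\dot\theta(s)=-P_{\mathcal{C}}(\theta)\nabla_\theta\Phi(\theta)$. Here $P_{\mathcal{C}}$ is the projection onto the tangent cone of the feasible set $\mathcal{C}=\{\theta: U_i[\theta]\ge c_i\}$; under the normality condition invoked in Theorem~\ref{thm:T7}, this is the Euclidean projection onto the active-constraint tangent cone. I would first assume enough smoothness for the chain rule to apply along the flow: $\theta\mapsto q_\theta$ differentiable in $L^1(d\mu)$ with $V q_\theta$ and $q_\theta\log q_\theta$ integrable, so that $F_S[q_\theta]$ of Definition~\ref{def:free} is $C^1$ in $\theta$, and $C$ of Definition~\ref{def:complexity} is $C^1$. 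I would state these explicitly as the regularity content implicit in ``descent flow''.

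The first step is the standard Lyapunov computation:
\[
\frac{d}{ds}\Phi(\theta(s))=\langle\nabla\Phi,\dot\theta\rangle=-\langle\nabla\Phi,P_{\mathcal{C}}\nabla\Phi\rangle=-\|P_{\mathcal{C}}\nabla\Phi\|^2\le0 .
\]
This uses the fact that the projection onto a closed convex cone satisfies $\langle v,P v\rangle=\|Pv\|^2$ (Moreau decomposition). Hence $\Phi$ is non-increasing. For the KKT picture one can equivalently write $\dot\theta=-\nabla\Phi+\sum_i\mu_i\nabla U_i$ with $\mu_i\ge0$ supported on the active constraints. In that form the same identity follows because the multiplier term is orthogonal to $\dot\theta$ on the active face. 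Feasibility is preserved because $\dot\theta$ lies in the tangent cone.

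The second step handles the conditional claim. I would expand
\[
\Phi(\theta(s))=\langle V\rangle_{q_s}-T\,S[q_s]+\kappa C(\theta(s)),
\]
with $T$ a fixed parameter. If $S[q_s]$ and $C(\theta(s))$ are constant on the active constraint set along the trajectory, then differentiating gives $\frac{d}{ds}\langle V\rangle_{q_s}=\frac{d}{ds}\Phi\le0$. Integrating, any decrease $\Phi(s_1)<\Phi(s_0)$ equals the decrease in $\langle V\rangle_{q_s}$. Because $V$ is defined independently of $q$ (Definition~\ref{def:potential}), there is no hidden $s$-dependence through $V$ itself, provided the decoder and environment are frozen. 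This freezing of the decoder is a point I would have to state as part of the hypothesis, since $\psi\subset\theta$ could otherwise move $V$.

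The main obstacle I expect is making the projected flow well defined at the boundary of $\mathcal{C}$. At corners where the active set changes, the projected vector field is discontinuous, and the classical chain rule fails. My first attempt would be to interpret the flow as the absolutely continuous solution of the differential inclusion $\dot\theta\in-\nabla\Phi-N_{\mathcal{C}}(\theta)$, where $N_{\mathcal{C}}$ is the normal cone. This requires $\mathcal{C}$ to be prox-regular, which follows locally from the constraint qualification. For such solutions, Brezis-type theory gives the energy identity $\frac{d}{ds}\Phi=-\|\dot\theta\|^2$ almost everywhere, which yields monotonicity by integration. If that is too heavy, I would restrict the statement to intervals of constant active set and concatenate them using continuity of $\Phi$.
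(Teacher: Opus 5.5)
Your proposal follows the paper's proof step for step. The paper also reads Axiom~\ref{ax:A1} as a projected gradient flow, gets $d\Phi/ds=-\nabla\Phi\cdot P(\nabla\Phi)\le 0$ from the cone-projection identity $v\cdot P(v)=\|P(v)\|^2$, and obtains the conditional claim by holding $S[q_s]$ and $C(\theta)$ fixed. Your differential-inclusion treatment of active-set changes and your caveat about $\psi\subset\theta$ moving $V$ add rigor that the paper does not supply.

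One small repair is needed, and the paper makes the same slip: write the flow as $\dot\theta=P_{\mathcal{C}}(-\nabla\Phi)$ rather than $-P_{\mathcal{C}}\nabla\Phi$. Only the former lies in the tangent cone, which your feasibility claim and your KKT form $\dot\theta=-\nabla\Phi+\sum_i\mu_i\nabla U_i$ both require, and the Lyapunov computation goes through verbatim with $v=-\nabla\Phi$.
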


\begin{proof}
Axiom~\ref{ax:A1} is the (projected) gradient flow $\dot\theta=-P(\nabla\Phi)$, where $P$ is the projection onto the tangent cone of the feasible set $\{U_i\ge c_i\}$. Then
\[
\frac{d\Phi}{ds}=\nabla\Phi\cdot\dot\theta=-\nabla\Phi\cdot P(\nabla\Phi)\le0,
\]
since a Euclidean (or Riemannian, in the natural-gradient metric) projection satisfies $v\cdot P(v)=\|P(v)\|^2\ge0$. Thus $\Phi$ is non-increasing. Because it is bounded below (Assumption~\ref{as:R3}: $V$ bounded below, $C\ge0$, $S$ bounded on the constrained family), its value converges to a finite limit. Convergence of the parameter trajectory to a Karush--Kuhn--Tucker point would additionally require the standard compactness and regularity conditions for projected gradient flows and is not asserted by this calculation alone. When an active constraint satisfies $U_i=c_i$, and when the entropy and complexity terms are held fixed, a decrease in $\Phi$ corresponds to a decrease in $\langle V\rangle_{q_s}$. In the distributional form, the Jordan--Kinderlehrer--Otto minimizing-movement scheme $q_{k+1}=\arg\min_q F_S[q]+\tfrac{1}{2h}W_2^2(q,q_k)$ satisfies $F_S[q_{k+1}]\le F_S[q_k]$ by construction and converges to the Wasserstein gradient flow of $F_S$ \cite{jko1998,ambrosio2008}.
\end{proof}

\subsubsection{Reductions and multimodal alignment}

The following results examine objective-level correspondences and cross-modal alignment.

\begin{table}[htbp]
\caption{Objective-level correspondences between SLEP and selected existing frameworks.}\label{tab:reductions}
\begin{tabular}{@{}p{2.6cm}p{3.4cm}p{5.2cm}@{}}
\toprule
Theory & Instantiation & Reduction \\
\midrule
Information Bottleneck (IB) & $U_S\mapsto I(Z;Y)$, energy $\mapsto I(Z;X)$ & $J=I(Z;Y)-\lambda I(Z;X)$; learning scale \\
Variational autoencoder (VAE) / evidence lower bound (ELBO) & $U_S\mapsto\mathbb{E}_q[\log p_\psi]$, energy $\mapsto\mathrm{KL}(q\|p)$ & $\lambda=1\Rightarrow J=\mathrm{ELBO}$ \\
Predictive coding / Free Energy Principle (FEP) & $E\mapsto-\log p(x,z)$, $T=1$ & $F_S[q]=\mathrm{KL}(q\|p(z\mid x))-\log p(x)$; requires the distributional functional $F_S[q]$ \\
Maximum-entropy reinforcement learning (MaxEnt RL) & energy $\mapsto-Q_{\mathrm{soft}}$, $T\mapsto\alpha$ & $\pi^*\propto e^{Q_{\mathrm{soft}}/\alpha}$ \\
Natural gradient & flow $\mapsto$ steepest descent in $g$ & $\dot\theta=-g^{-1}\nabla L$ \\
World models & $V\mapsto$ latent predictive negative log-likelihood (NLL) & latent-dynamics learning $\approx\min\int V\,dt$ \\
\bottomrule
\end{tabular}
\end{table}

\begin{theorem}[T9: objective-level correspondences with existing frameworks]\label{thm:T9}
Under the substitutions in Table~\ref{tab:reductions}, selected objective functionals can be written in forms corresponding to components of the present framework. These correspondences do not imply equivalence of the full theories.
\end{theorem}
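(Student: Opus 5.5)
The plan is to verify Theorem~\ref{thm:T9} row by row. For each entry of Table~\ref{tab:reductions} I will substitute the listed instantiation into the relevant SLEP object: the action $J$ of \eqref{eq:action2}, the free-energy functional $F_S$ of \eqref{eq:freeenergy2}, the Gibbs law of Theorem~\ref{thm:T3}, or the descent flow of Axiom~\ref{ax:A1}. Each substitution should be an algebraic identity, and I will state explicitly which component it matches. Because the theorem only claims objective-level correspondence, it suffices to exhibit these identities. No dynamical or full-theory equivalence needs to be proved; I will say so at the end, since that is the theorem's disclaimer.

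The rows go in order of increasing effort.

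\textbf{IB.} Take time-independent integrands on the learning scale with $U_S\mapsto I(Z;Y)$ and $E_S\mapsto I(Z;X)$. The integral over $[t_0,t_1]$ then contributes only the factor $(t_1-t_0)$, and $J/(t_1-t_0)=I(Z;Y)-\lambda I(Z;X)$, which is the Tishby Lagrangian.

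\textbf{VAE.} Set $U_S\mapsto\mathbb{E}_q[\log p_\psi(x\mid z)]$, $E_S\mapsto\mathrm{KL}(q(z\mid x)\|p(z))$ and $\lambda=1$. This gives the ELBO directly, and $\lambda\neq1$ gives the $\beta$-VAE.

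\textbf{FEP.} Take $T=1$ and energy $V(z)\mapsto-\log p(x,z)$ at fixed $x$. Write $p(x,z)=p(z\mid x)p(x)$ inside $F_S[q]=\langle V\rangle_q-S[q]$. Expanding gives $F_S[q]=\mathrm{KL}(q\|p(z\mid x))-\log p(x)$, which is the same computation as the proof of Theorem~\ref{thm:T3} with $q^*=p(z\mid x)$ and $Z=p(x)$.

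\textbf{MaxEnt RL.} At a fixed state, apply Theorem~\ref{thm:T3} over actions with $V\mapsto-Q_{\mathrm{soft}}$ and $T\mapsto\alpha$. This yields $\pi^*\propto e^{Q_{\mathrm{soft}}/\alpha}$ together with the soft value $-F_S[\pi^*]=\alpha\log\int e^{Q/\alpha}$.

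\textbf{Natural gradient.} Read the descent flow of Axiom~\ref{ax:A1}, with constraints inactive, as the Riemannian gradient flow in the Fisher metric of Axiom~\ref{ax:A2}, the remark already made in the proof of Theorem~\ref{thm:T8}. Since the Riemannian gradient is $g^{-1}\nabla L$, this gives $\dot\theta=-g^{-1}\nabla L$.

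\textbf{World models.} With $V$ the latent predictive NLL, the learning objective $\sum_t-\log p_\psi(x_{t+1}\mid z_t)$ is a Riemann sum of $\int V\,dt$ along the trajectory. This is the configuration part of the energy $E_S$ in Axiom~\ref{ax:A4}, with the kinetic term dropped.

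The main obstacle is the FEP and world-model rows. For FEP, $V$ in Definition~\ref{def:potential} is defined as an expectation over $\mathcal{D}_z$ and is independent of $x$, whereas $-\log p(x,z)$ depends on the observed $x$. I will handle this by stating that the substitution replaces $V$ with the conditioned, single-observation potential, so the correspondence holds for $F_S$ as a formal functional and not for the specific $V$ of D6. For world models, I will only claim the $\approx$ in the table. The kinetic term and the time discretization break exact equality, so I will state the approximation as a small-step and negligible-kinetic-cost limit rather than an identity.
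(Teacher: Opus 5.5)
Your proposal is correct and follows the paper's proof essentially row for row. It uses the same substitutions into $J$, into $F_S$ (the FEP identity $\mathrm{KL}(q\|p(z\mid x))-\log p(x)$), into the Gibbs law of Theorem~\ref{thm:T3} for MaxEnt RL, and into the Fisher-metric steepest descent, and it ends with the same objective-level disclaimer. Your extras are refinements rather than a different route: dividing by $t_1-t_0$ for the time-independent IB integrand, the soft value $-F_S[\pi^*]$, and the caveat that the FEP row uses a single-observation potential rather than the $V$ of Definition~\ref{def:potential}. For the world-model row, the paper calls the per-step NLL a Monte Carlo estimate of $V$, by sampling over $x$; this complements your Riemann-sum reading, which covers the discretization in $t$.
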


\begin{proof}
The substitutions are evaluated as follows.
(a) \emph{IB}: with $U_S\mapsto I(Z;Y)$ and energy $\mapsto I(Z;X)$, \eqref{eq:action2} reads $J=I(Z;Y)-\lambda I(Z;X)$, the IB Lagrangian \cite{tishby1999,alemi2017}; both terms are encoder functionals, so the correspondence applies on the learning scale.
(b) \emph{VAE}: with $U_S\mapsto\mathbb{E}_q[\log p_\psi(x\mid z)]$ and energy $\mapsto\mathrm{KL}(q\|p)$, at $\lambda=1$ we get $J=\mathbb{E}_q[\log p_\psi]-\mathrm{KL}(q\|p)=\mathrm{ELBO}$ \cite{kingma2014}.
(c) \emph{FEP}: with $E\mapsto-\log p(x,z)$ and $T=1$, Definition~\ref{def:free} gives $F_S[q]=\langle-\log p(x,z)\rangle_q-S[q]=\mathbb{E}_q[\log q(z)-\log p(x,z)]=\mathrm{KL}\big(q(z)\|p(z\mid x)\big)-\log p(x)$, Friston's variational free energy \cite{friston2010,friston2017}; this identity uses the distribution-level functional $F_S$ defined in Definition~\ref{def:free}.
(d) \emph{MaxEnt RL}: with energy $\mapsto-Q_{\mathrm{soft}}$ and $T\mapsto\alpha$, Theorem~\ref{thm:T3} gives the optimal policy $\pi^*\propto e^{Q_{\mathrm{soft}}/\alpha}$ \cite{levine2018,haarnoja2018}, which has the same exponential-family form in the decision setting.
(e) \emph{Natural gradient}: steepest descent in the geometry of Axiom~\ref{ax:A2} is $\dot\theta=-g^{-1}\nabla L$ \cite{amari1998}.
(f) \emph{World models}: in the cited world-model formulations, latent-dynamics learning includes minimization of latent predictive NLL, a Monte Carlo estimate of $\int V\,dt$ \cite{ha2018,hafner2025}.
\end{proof}

\begin{theorem}[T10: multimodal alignment]\label{thm:T10}
For multiple encoders sharing $(M_S,g,\psi)$ (Setting~\ref{set:S1}), if the communication/alignment constraint is active ($\nu_{comm}>0$), then a solution of Axiom~\ref{ax:A1} satisfies a cross-modal distortion bound $\mathbb{E}\,d_g\big(z^{(m)},z^{(m')}\big)^2\le D$. For a family of convex constrained problems indexed by the distortion budget $D$, the feasible cross-modal distortion bound is non-increasing as $D$ is reduced. Any monotone relation between the multiplier $\nu_{comm}$ and $D$ requires the usual sensitivity and regularity conditions.
\end{theorem}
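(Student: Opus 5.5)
The plan is to read the communication/alignment channel of Definition~\ref{def:utility} as the cross-modal distortion term and then apply complementary slackness from Axiom~\ref{ax:A1}. Concretely, we instantiate $u_{comm}$ for the multimodal system of Setting~\ref{set:S1} by taking $\hat z$ to be the latent produced by a second encoder on the same underlying event. Given modalities $m,m'$, set $z^{(m)}=\Phi_m(x_m)$ and $z^{(m')}=\Phi_{m'}(x_{m'})$. Both live in the common quotient $M_S$, which Theorem~\ref{thm:T1} equips with the Riemannian metric $g$. The ensemble utility is then
\[
U_{comm}=-\mathbb{E}\,d_g\big(z^{(m)},z^{(m')}\big)^2 .
\]
Setting the threshold $c_{comm}=-D$, the constraint $U_{comm}\ge c_{comm}$ in \eqref{eq:A1} is literally the inequality $\mathbb{E}\,d_g(z^{(m)},z^{(m')})^2\le D$. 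This first step is definitional: it identifies the alignment requirement with the constraint and checks that $d_g$ is a genuine distance on $M_S$, which follows from positive-definiteness of $g$.

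Second, we show that a solution of Axiom~\ref{ax:A1} satisfies the bound. By the axiom, a solution is a (local) solution of the constrained problem, so feasibility alone already gives $U_{comm}\ge c_{comm}$ and hence the bound. The hypothesis $\nu_{comm}>0$ adds more. Under a constraint qualification (the analogue of the normality condition used in Theorem~\ref{thm:T7}), the KKT complementary slackness condition $\nu_{comm}(U_{comm}-c_{comm})=0$ forces the constraint to be active, so the bound holds with equality, $\mathbb{E}\,d_g^2=D$. This mirrors Corollary~\ref{cor:C71}, with activity replacing inactivity.

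Third, we treat monotonicity in $D$. For the convex family $P(D)$, minimizing $F_S+\kappa C$ subject to the constraints, the feasible set $\mathcal{F}(D)=\{\theta:\mathbb{E}\,d_g^2\le D,\ U_i\ge c_i\}$ is nested: $D'\le D$ implies $\mathcal{F}(D')\subseteq\mathcal{F}(D)$. Consequently the attainable cross-modal distortion bound $\sup_{\mathcal{F}(D)}\mathbb{E}\,d_g^2\le D$ is non-increasing as $D$ decreases, and the optimal value $\Phi^*(D)$ is non-increasing in $D$ (it can only rise as $D$ shrinks). Convexity gives that $\Phi^*$ is convex in $D$. Standard sensitivity theory then identifies $\nu_{comm}\in-\partial\Phi^*(D)$, and monotonicity of the subdifferential of a convex function would give the multiplier relation. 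The statement only asserts this under extra regularity, so we will state it conditionally and cite the envelope/sensitivity theorem rather than prove it.

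The main obstacle is making the second step rigorous. The problem in \eqref{eq:A1} is over parameters $\theta$, and $\mathbb{E}\,d_g^2$ involves a geodesic distance that is generally nonconvex and only Lipschitz in $\theta$. Existence of multipliers therefore needs a constraint qualification, and a clean convex-duality argument is available only in the convex sub-family. I would first try to sidestep this by noting that the bound itself follows from feasibility alone. The multiplier is then invoked only to upgrade the bound to tightness, assuming a Mangasarian--Fromovitz-type condition at the solution.
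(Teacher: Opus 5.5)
Your proposal is correct and follows the paper's proof essentially step by step. You instantiate $u_{comm}$ with the receiver taken to be another modality's encoder, so that the constraint in \eqref{eq:A1} reads $\mathbb{E}\,d_g(z^{(m)},z^{(m')})^2\le D$; you use complementary slackness with $\nu_{comm}>0$ to get equality; you obtain monotonicity in $D$ from the shrinking admissible set; and you defer any monotone $\nu_{comm}$--$D$ relation to standard sensitivity conditions, just as the paper does. Your extra remarks are accurate refinements of the paper's argument rather than a different route: feasibility alone already gives the bound, and the multipliers for \eqref{eq:A1} need their own constraint qualification, since Theorem~\ref{thm:T7} concerns the planning problem \eqref{eq:A4}.
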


\begin{proof}
Take $u_{comm}$ with receiver equal to another modality's encoder, so that the alignment constraint reads $\mathbb{E}\,d_g(z^{(m)},z^{(m')})^2\le D$ for a distortion budget $D$ (Definition~\ref{def:utility}). When active, complementary slackness (Theorem~\ref{thm:T7}, applied to \eqref{eq:A1}) gives equality $\mathbb{E}\,d_g^2=D$, yielding the stated bound. Across a family of otherwise identical constrained problems, reducing $D$ imposes a smaller admissible distortion. Complementary slackness alone does not establish a monotone relation between $\nu_{comm}$ and $D$; such a relation requires the standard sensitivity and regularity conditions for the parameterized convex problem.
\end{proof}

\emph{Remark (curvature).} The proposed relation between curvature and semantic complexity is treated as a heuristic interpretation of the geodesic-deviation equation, not as a theorem.

\subsubsection{Relation of the formal results to the hypotheses}

Section~\ref{sec:hypotheses} presents SLEP in a conceptual order, whereas the results of this section are organized according to mathematical dependency. Accordingly, the relationship between the hypotheses and the formal results is many-to-many rather than one-to-one: individual results may bear on multiple hypotheses, while each hypothesis may be developed through several results. The theorems, lemma, and corollaries do not by themselves establish the empirical validity of Hypotheses~I--V; rather, they specify the conditional mathematical consequences that follow from the setting, regularity assumptions, definitions, and axioms of this chapter. The relation between curvature and semantic complexity, by contrast, remains heuristic rather than a theorem-level consequence, as stated after Theorem~\ref{thm:T10}. Table~\ref{tab:hyp} summarizes the principal correspondence; it records formal correspondences only, and empirical assessment is deferred to the falsifiable predictions of the following section (Predictions~P1--P5).

\begin{table}[htbp]
\caption{Correspondence between the hypotheses of Section~\ref{sec:hypotheses} and the formal results of this section.}\label{tab:hyp}
\begin{tabular}{@{}p{3.0cm}p{2.5cm}p{5.0cm}@{}}
\toprule
Hypothesis (Section~\ref{sec:hypotheses}) & Principal formal results & Formal relation \\
\midrule
\textbf{I}: Low-energy latent semantic states & T7, C7.2, T8 & Derives the semantic action under utility constraints, identifies the marginal reasoning-utility versus computational-cost stopping rule, and establishes non-increase of the learning objective under the relevant conditions. \\
\textbf{II}: Variational semantic trajectories & T5, T6, C6.1, T7 & Characterizes endpoint-conditioned inference through the Onsager--Machlup variational principle, recovers geodesic behaviour in locally flat regions, and relates goal-directed planning to a derived semantic action. \\
\textbf{III}: Semantic geometry & T1, Lemma~1, T6, T10 & Establishes the well-posed Fisher--Riemannian semantic manifold, its geodesic organization, and a cross-modal geodesic-distortion bound. \\
\textbf{IV}: Semantic thermodynamics & T2, C2.1, T3, T4, C4.1 & Derives the invariant measure and Gibbs law, proves the semantic free-energy variational principle, identifies the stationary distribution, and gives an operational estimator of semantic temperature. \\
\textbf{V}: Semantic least-energy as a unifying principle & T9--T10; T1--T8 collectively & Identifies objective-level correspondences with several established frameworks and applies related variational and geometric constructions across modalities. \\
\bottomrule
\end{tabular}
\end{table}

\subsection{Predictions}\label{sec:falsify}

To limit post hoc adjustment, the evaluation protocol is specified before model assessment. Each energy component is nondimensionalized against a reference model, $\hat E_i:=E_i/E_i^{(\mathrm{ref})}$ (reference values from the initialization or a designated baseline), defining $V_{\mathrm{ref}}$ of Definition~\ref{def:energy} and the ``normalized semantic energy''. The number of coefficients estimated from evaluation data is limited: $k_B$ is absorbed into $T$; $\lambda$ and the utility weights are multipliers fixed by the $c_i$ (Theorem~\ref{thm:T7}); $C$ is defined on the learning scale (Definition~\ref{def:complexity}); and $T$ is estimated from the slope of Corollary~\ref{cor:C41}. Utility and energy are evaluated using separately specified data (held-out transfer for $U$; in-distribution NLL for $V$). The weights are estimated on a development family and then held fixed for evaluation on a disjoint family, with failure criteria specified in advance. Each prediction specifies its time scale and external reference measure.

\begin{prediction}[P1: energy decreases with learning; compression after fitting]\label{pred:P1}
$\widehat{F}_S$ and $\langle\hat V\rangle$ are non-increasing along $s$; after utility (held-out task performance) plateaus, $\langle\hat V\rangle$ continues to decrease over the prespecified post-plateau interval. \emph{Chain}: Axiom~\ref{ax:A1}$\to$Theorem~\ref{thm:T8}. \emph{Scale}: $s$. \emph{Anchor}: utility and energy are evaluated using separately specified held-out and in-distribution measurements. \emph{Failure}: $\langle\hat V\rangle$ stalls or rises after the plateau. This prediction can be compared with competing accounts of two-phase learning \cite{shwartzziv2017,saxe2019}.
\end{prediction}

\begin{prediction}[P2: trajectories concentrate on low-action Onsager--Machlup paths]\label{pred:P2}
The $\mathcal{A}_{\mathrm{OM}}$ quantile of observed trajectories is lower than that of endpoint-matched reference paths according to a prespecified effect-size and uncertainty criterion; path deviation has a prespecified positive association with externally defined novelty. \emph{Chain}: Axiom~\ref{ax:A3}$\to$Theorem~\ref{thm:T5}. \emph{Scale}: $t$. \emph{Anchor}: novelty criterion (membership in a prespecified out-of-distribution benchmark, label shift) independent of $E_{pred}$. \emph{Failure}: deviation is uncorrelated with, or negatively correlated with, external novelty under the prespecified association criterion.
\end{prediction}

\begin{prediction}[P3: deliberation trajectories in regions of small potential gradient]\label{pred:P3}
For regions satisfying a prespecified threshold on $\|\nabla_g\hat V\|$, endpoint-conditioned deliberation trajectories have smaller geodesic deviation than matched trajectories in higher-gradient regions; the deviation is positively associated with $\|\nabla_g\hat V\|$. \emph{Chain}: Axiom~\ref{ax:A3}$\to$Theorem~\ref{thm:T5}$\to$Theorem~\ref{thm:T6}. \emph{Scale}: $t$ (internal $\Psi$ evolution). \emph{Anchor}: the potential-gradient threshold is estimated from the $\hat V$ field independently of trajectory fitting. \emph{Failure}: geodesic deviation in low-gradient regions is not smaller than in matched higher-gradient regions under the prespecified criterion.
\end{prediction}

\begin{prediction}[P4: Boltzmann organization as an affine law; measurable temperature]\label{pred:P4}
Across states $\hat I$ is affine in $\hat V$; the split-specific estimates of $\hat T$ agree within a prespecified tolerance or uncertainty interval. \emph{Chain}: Axiom~\ref{ax:A3}$\to$Theorem~\ref{thm:T4}$\to$Corollary~\ref{cor:C41}. \emph{Scale}: equilibrium statistics of $t$. \emph{Anchor}: $\hat I$ and $\hat V$ are obtained from separately specified density and decoder-based estimators. \emph{Failure}: a prespecified lack-of-fit criterion rejects the affine model, or between-split variation exceeds the stated tolerance.
\end{prediction}

\begin{prediction}[P5: alignment as a function of the cross-modal objective weight]\label{pred:P5}
Cross-modal geodesic distortion is predicted to decrease as the prespecified cross-modal objective weight increases. Relative to the intervention model, the no-objective control is expected to show a smaller reduction in distortion. \emph{Chain}: Setting~\ref{set:S1}$+$Axiom~\ref{ax:A1}$\to$Theorem~\ref{thm:T10}. \emph{Scale}: $s$. \emph{Anchor}: controlled comparison of training with and without a cross-modal objective. \emph{Failure}: failure of a prespecified monotonic-trend test, or a between-condition effect within the prespecified equivalence or minimum-effect threshold.
\end{prediction}

\subsection{Scope and limitations}

The present formulation is not tied to a single architecture, optimization algorithm, or computational substrate, although its assumptions must be checked for each implementation. The energy, utility, and geometric quantities are operational definitions, and alternative formulations may be considered for other settings. Two boundaries are explicit. First (Assumption~\ref{as:R4}), the potential is defined under environmental stationarity; non-stationary environments are not covered here and would require an extension, for example through stochastic optimal control. Second (Axiom~\ref{ax:A4}), the planning principle is restricted to goal-directed segments; Corollary~\ref{cor:C61} identifies conditional agreement with the inference dynamics only in the regime stated there. The objective-level correspondence with predictive coding (Theorem~\ref{thm:T9}(c)) covers state estimation but not action selection (expected free energy).

Finally, we summarize the epistemic status of the axioms. The axioms state the principal empirical commitments of the framework, although empirical tests also depend on the regularity assumptions and operational definitions. Their current status is discussed in terms of conditional mathematical compatibility, relationships to existing objectives, and falsifiable predictions. Theorems~\ref{thm:T2} and~\ref{thm:T4} identify the same stationary density under their stated assumptions, while Corollary~\ref{cor:C61} identifies a restricted regime in which the planning and inference formulations select the same path. These results provide conditional compatibility checks rather than a proof of global consistency. Second, Table~\ref{tab:reductions} records objective-level correspondences with several established frameworks. These correspondences motivate the selected formulation but do not establish the empirical validity or uniqueness of the axioms. Third, each prediction P1--P5 specifies a derivation chain, an external reference measure, and a failure criterion (Section~\ref{sec:falsify}). A failed prediction would reduce support for the relevant axiom conditional on the regularity assumptions, operational definitions, and measurement protocol. Alternative explanations, including assumption violations and estimation error, would need to be assessed separately. The empirical standing of the axioms should therefore be updated in light of tests of Predictions~P1--P5, subject to the assumptions and measurement procedures described above.

\backmatter

\section*{Declarations}

\bmhead{Funding}
Not applicable.

\bmhead{Conflict of interest}
Not applicable.

\bmhead{Data availability}
Not applicable.

\bmhead{Code availability}
Not applicable.

\bmhead{Author contribution}
J.Z. conceived the central idea during discussions on poetry and math with J.J.Z. while walking through the historic campuses of Trinity College and King's College, University of Cambridge, over the Christmas and New Year period. H.Y.Z. and J.Z. developed the main theoretical framework. J.J.Z. and H.N.H. contributed to the development of the theoretical framework. All authors contributed to the Discussion, and participated in the writing and revision of the manuscript.

\bibliography{sn-bibliography}

\end{document}